\renewcommand\thesubsubsection{\arabic{subsubsection}.}
\newtcolorbox{mybox}[2][]{%
	attach boxed title to top center
%	= {yshift=-8pt},
%	colback      = black,
%	colframe     = black,
	%fonttitle    = \bfseries,
	colbacktitle = white,
	title        = #2,
	enhanced,
}
\newtheorem{theorem}{Theorem}[section]
\newtheorem{corollary}{Corollary}[section]
\newtheorem{proposition}{Proposition}[section]
\newtheorem{lemma}{Lemma}[section]
\theoremstyle{remark}
\newtheorem*{remark}{Remark}
\theoremstyle{definition}
\newtheorem{definition}{Definition}[section]
\definecolor{brickred}{cmyk}{0,0.89,0.94,0.28}
\definecolor{goldenrod}{cmyk}{0,0.10,0.84,0}
\definecolor{purple}{cmyk}{0.45,0.86,0,0}
\definecolor{rawsienna}{cmyk}{0,0.72,1,0.45}
\definecolor{olivegreen}{cmyk}{0.64,0,0.95,0.40}
\definecolor{peach}{cmyk}{0,0.5,0.7,0}
\definecolor{darkolive}{rgb}{0.,0.4,0.}
\colorlet{grey}{gray!40}
\newcommand{\ostar}{\mathbin{\mathpalette\make@circled\star}}
\newcommand{\make@circled}[2]{%
	\ooalign{$\m@th#1\smallbigcirc{#1}$\cr\hidewidth$\m@th#1#2$\hidewidth\cr}%
}
\newcommand{\smallbigcirc}[1]{%
	\vcenter{\hbox{\scalebox{0.77778}{$\m@th#1\bigcirc$}}}%
}
\begin{document}

\title{Estimating the Sizes of Binary Error-Correcting Constrained Codes}
%
%
% author names and IEEE memberships
% note positions of commas and nonbreaking spaces ( ~ ) LaTeX will not break
% a structure at a ~ so this keeps an author's name from being broken across
% two lines.
% use \thanks{} to gain access to the first footnote area
% a separate \thanks must be used for each paragraph as LaTeX2e's \thanks
% was not built to handle multiple paragraphs
%

\author{V.~Arvind~Rameshwar,~\IEEEmembership{Student Member,~IEEE,}
        and~Navin~Kashyap,~\IEEEmembership{Senior~Member,~IEEE}% <-this % stops a space
\thanks{This work was supported in part by a Qualcomm Innovation Fellowship India 2022. The work of V.~A.~Rameshwar was supported by a Prime Minister's Research Fellowship, from the Ministry of Education, Govt. of India.}
\thanks{The authors are with the Department of Electrical Communication Engineering, Indian Institute of Science, Bengaluru 560012, India (e-mail: vrameshwar@iisc.ac.in;~nkashyap@iisc.ac.in).}
}

%\markboth{IEEE Journal on Selected Areas in Information Theory, January 2023}%
%\MakeLowercase{\textit{et al.}
%{Rameshwar and Kashyap: }

\maketitle

\begin{abstract}
%In this paper, we study binary constrained codes that are also resilient to bit-flip errors and erasures. 
%In our first approach, we compute the sizes of constrained subcodes of  linear codes. Since  there exist well-known linear codes that achieve vanishing probabilities of error over the binary symmetric channel (which causes bit-flip errors) and the binary erasure channel, constrained subcodes of such linear codes are also resilient to random bit-flip errors and erasures. We employ a simple identity from the Fourier analysis of Boolean functions, which transforms the problem of counting constrained codewords of linear codes to a question about the structure of the dual code. Via examples of constraints, we illustrate the utility of our method in providing explicit values or efficient algorithms for our counting problem. Our second approach is to obtain good upper bounds on the sizes of the largest constrained codes that can correct a fixed number of combinatorial errors or erasures. We accomplish this using an extension of Delsarte's linear program (LP) to the setting of constrained systems. We observe that the numerical values of our LP-based upper bounds beat those obtained using the generalized sphere packing bounds of Fazeli, Vardy, and Yaakobi (2015).

In this paper, we study binary constrained codes that are resilient to bit-flip errors and erasures. 
In our first approach, we compute the sizes of constrained subcodes of linear codes. Since  there exist well-known linear codes that achieve vanishing probabilities of error over the binary symmetric channel (which causes bit-flip errors) and the binary erasure channel, constrained subcodes of such linear codes are also resilient to random bit-flip errors and erasures. We employ a simple identity from the Fourier analysis of Boolean functions, which transforms the problem of counting constrained codewords of linear codes to a question about the structure of the dual code. We illustrate the utility of our method in providing explicit values or efficient algorithms for our counting problem, by showing that the Fourier transform of the indicator function of the constraint is computable, for different constraints. Our second approach is to obtain good upper bounds, using an extension of Delsarte's linear program (LP), on the largest sizes of constrained codes that can correct a fixed number of combinatorial errors or erasures. We observe that the numerical values of our LP-based upper bounds beat the generalized sphere packing bounds of Fazeli, Vardy, and Yaakobi (2015).

\end{abstract}

\begin{IEEEkeywords}
Constrained coding, Fourier analysis, linear programming bounds
\end{IEEEkeywords}

\IEEEpeerreviewmaketitle

\section{Introduction}
{C}{onstrained} coding is a method that is employed in several domains such as magneto-optical recording (see, for example, \cite{Roth} or \cite{Immink}), DNA data storage \cite{dna1,dna2}, and energy harvesting communication \cite{subblock1,infoenergy}, which  allows the encoding of arbitrary user data sequences into only those sequences that respect a certain constraint. Our interest in this paper is in constrained codes that are also resilient to symmetric errors and erasures.

In our first approach, we consider the setting of the transmission of constrained codes over a noisy channel that induces errors stochastically. 
%For example, if a code designer would like to write data onto a magnetic disk, then it is imperative that the data sequences satisfy a constraint, which is often imposed by the physical recording medium, and in addition, also be able to correct (possibly) random errors and erasures that are introduced. 
It is in this context that we consider the transmission of constrained subcodes of binary linear codes: if the channel introducing errors or erasures is memoryless and symmetric, there are well-known binary linear codes that achieve the capacity or whose rates are very close to the capacity of the channel (see \cite{polar, Reeves,ru1}). In particular, this means that constrained subcodes of such linear codes also enjoy vanishing error probabilities over such binary-input memoryless symmetric (BMS) channels (which include the binary symmetric channel (BSC) that introduces bit-flip errors and the binary erasure channel (BEC)), in the limit as the blocklength goes to infinity. This observation can be useful for the construction of error-correcting constrained coding schemes of good rates over input-constrained BMS channels, without feedback \cite{arnk22titarxiv}. We mention here that the problem of {explicitly} constructing capacity-achieving codes over input-constrained memoryless channels, which form a special class of finite-state channels (FSCs), is still wide open. {While it is known from the results of \cite{sasoglutal1} and \cite{litan} (see also \cite{shuvaltal}) that polar codes achieve the capacity of such input-constrained DMCs, the results therein do not explicitly identify  the polar code bit-channels to send information bits over, and hence do not give rise to an explicit construction.}

 %subcodes of linear codes that respect a fixed constraint. In the construction of binary linear codes, a code designer may sometimes want the codewords to satisfy additional properties. 
%One motivation for considering constrained subcodes of binary linear codes is the following: the data recording or communication medium can be thought of as a noisy channel that introduces errors stochastically. If the channel is memoryless and symmetric, besides, there are well-known binary linear codes that achieve the capacity or whose rates are very close to the capacity of the channel (see \cite{polar, Reeves,ru1}). In particular, this means that constrained subcodes of such linear codes also enjoy vanishing error probabilities over such binary-input memoryless symmetric (BMS) channels, in the limit as the blocklength goes to infinity. This observation can be useful for the construction of constrained coding schemes of good rates over input-constrained BMS channels, without feedback \cite{arnk22titarxiv}. We mention here that the problem of constructing codes over input-constrained memoryless channels, which form a special class of finite-state channels (FSCs), is still wide open.

%In this paper, hence, we shall concern ourselves with the computation of the sizes of (arbitrarily) constrained subcodes of general linear codes. 
As part of our first approach, we are interested in the problem of determining the sizes of constrained subcodes of linear codes. Our approach to this counting problem makes use of a simple identity from the Fourier analysis of Boolean functions, namely, Plancherel's Theorem, which transforms our counting problem to one in the space of the dual code. An immediate advantage of this approach is that the dimension of the vector space over which we count, which is the minimum of the dimensions of the linear code and its dual, is always bounded above by half the blocklength of the code. Our study reveals the somewhat surprising fact that for many constraints, the Fourier transform of the indicator function of the constraint is computable, either analytically, or via efficient recursive procedures---an observation that can be of independent theoretical interest. We show, using specific examples of constraints, that our approach can yield not just the values of the sizes of constrained subcodes of specific linear codes, but also interesting insights into the construction of linear codes with a prescribed number of constrained codewords. 
%An important ingredient in our method is the Fourier transform of the indicator function that a word is constrained---an object that we believe is of independent interest. We show that in the cases of certain constraints, this Fourier transform is explicitly computable, and in some others, is efficiently calculable via a recursive procedure. 
%We also apply our Fourier-analytic approach to the problem of computing the weight distributions of constrained words in the ambient space $\mathbb{F}_2^n$, where $n$ is the blocklength, and within the linear code $\mathcal{C}$ of interest.

%The remainder of the paper is organized as follows: Section \ref{sec:preliminaries} establishes notation,  refreshes some background on linear codes and elementary Fourier analysis on the hypercube, and introduces the main theorems. Section \ref{sec:appl1} takes up examples of constraints wherein the Fourier transform of the indicator function of the constraint is explicitly computable, and discusses the ramifications of the theorems in Section \ref{sec:preliminaries}. Section \ref{sec:appl2} contains examples of constraints wherein the aforementioned Fourier transform is efficiently computable, numerically, and provides applications of the theorems in Section \ref{sec:preliminaries} to such constraints. Next, Section \ref{sec:wtdistlin} briefly outlines a method to compute the weight distributions of constrained codewords in a linear code. The paper is then concluded in Section \ref{sec:conclusion}, with some directions for future work.

Next, in our second approach, we consider the situation when the constrained codewords we seek to transmit or store are subject to adversarial (or combinatorial) errors or erasures. More precisely, we are interested in the scenario where there is an upper bound on the \emph{number} of errors or erasures that the channel can introduce (with potential adversarial knowledge of the codeword as well), and we would like to recover our constrained codeword with zero error. It is well-known (see, for example, \cite{roth_coding_theory}) that the minimum Hamming distance (or simply, minimum distance) of the constrained code determines the number of such errors or erasures that it can tolerate. Hence, we seek to obtain good bounds on the sizes of constrained codes with a prescribed minimum distance. %Since the error/erasure-correcting capability of codes is determined by their minimum Hamming distance, with a large minimum distance giving rise to better error/erasure correction, we shall try to obtain bounds on the largest sizes of constrained codes with a prescribed minimum distance.

There is extensive literature on the construction of and bounds for constrained codes with a certain minimum distance, and we refer the reader to Chapter 9 of \cite{Roth} for references. In particular, \cite{Ferreira} provided a simple lower bound on the sizes of runlength-limited (RLL) constrained codes with a given minimum Hamming distance, by a coset-averaging argument for linear codes. These bounds were then improved upon by Kolesnik and Krachkovsky \cite{kolesnik} and Marcus and Roth \cite{marcusroth}, via the solutions to certain constrained optimization problems. Less was known in the case of upper bounds on constrained codes with a given minimum distance, until the works of Cullina and Kiyavash \cite{cullina} (see also \cite{kulkarni}) and Fazeli, Vardy, and Yaakobi \cite{fazeli}, which provided a generalization of the well-known sphere packing bound for codes, to the setting of constrained codes\footnote{While these papers were focussed on obtaining bounds on the sizes of codes for combinatorial error models, their techniques can be easily applied to determining upper bounds on the sizes of constrained codes with a given minimum distance as well.}. The approach in \cite{cullina} and \cite{fazeli} was based on finding the size of the largest matching, or equivalently, the size of the smallest transversal, in a suitably defined hypergraph. 

In this paper, we provide a different approach to deriving good upper bounds on the sizes of constrained codes with a given minimum Hamming distance, by modifying Delsarte's well-known linear program (LP) \cite{delsarte} to the setting of constrained systems. While on a first pass, we propose an LP whose number of variables is exponential in the blocklength of the code, we show that for certain constraints, it is possible to ``symmetrize'' this LP to derive an equivalent LP with much smaller numbers of variables and constraints, which are sometimes only polynomial functions of the blocklength. We use our LPs to numerically calculate upper bounds on the sizes of the largest constrained codes with a prescribed minimum distance, for different constraints, and show that the values we obtain by our approach beat those obtained via the generalized sphere packing bounds. 

The remainder of the paper is organized as follows: Section \ref{sec:preliminaries} puts down notation and refreshes some background on binary codes and elementary Fourier analysis on the hypercube. Section \ref{sec:main} introduces the main theorem for counting constrained codewords in linear codes and our LPs for upper bounding the sizes of constrained codes with a given minimum distance. Section \ref{sec:eg} then applies the theorems and LPs discussed in Section \ref{sec:main} to different constraints. The paper is  concluded in Section \ref{sec:conclusion} with some directions for future work.
\section{Preliminaries}
\label{sec:preliminaries}
\subsection{Notation}
\label{sec:notation}
Given integers $a, b$, we use the notation $[a:b]$ to denote the set $\{a,a+1,\ldots,b\}$, for $a\leq b$. We use $[n]$ as shorthand for $[1:n]$. For a vector $\mathbf{a}\in \{0,1\}^n$, we denote by $w(\mathbf{a})$ its Hamming weight (or, simply, weight), i.e., $w(\mathbf{a})$ is the number of ones present in $\mathbf{a}$, and we use the notation supp$(\mathbf{a})$ to denote the set of coordinates $\{i\in[n]:\ a_i = 1\}$. We say that a length-$n$ vector $\mathbf{a}$ is supported on $S\subseteq [n]$, if supp$(\mathbf{a})\subseteq S$. Given a pair of vectors $\mathbf{a},\mathbf{b} \in \mathbb{R}^n$, we define the Hamming distance (or simply, distance) $d(\mathbf{a},\mathbf{b})$ between $\mathbf{a}$ and $\mathbf{b}$ as the number of coordinates in which they differ, i.e., $d(\mathbf{a},\mathbf{b}) = |\{i\in [n]:\ a_i\neq b_i\}|$. Further, given a length-$n$ vector $\mathbf{a}$, let $a_i^j$ denote the vector $(a_k:\ i\leq k\leq j)$, for $j\leq n$. Given a length $n$, we let $\mathbf{B}_n(i)$ denote the binary representation of $0\leq i\leq 2^n-1$. The subscript `$n$' will be dropped when evident from the context. Further, for $\mathbf{z}\in \{0,1\}^n$, we define $w_\mathbf{z}(\mathbf{a})$ as $w_\mathbf{z}(\mathbf{a}):=|\text{supp}(\mathbf{z})\cap \text{supp}(\mathbf{a})|$. For $r\in \{0,1\}$, we define the vector $r^m$ to be the length-$m$ vector all of whose symbols equal $r$. Further, given two binary vectors $\mathbf{a} = (a_1,\ldots,a_n)$ and $\mathbf{b} = (b_1,\ldots,b_n)$, their sum (over $\mathbb{F}_2$) is $\mathbf{a}+\mathbf{b}:=(a_1+b_1,\ldots,a_n+b_n)$, where all the summations are over $\mathbb{F}_2$, their dot product $\mathbf{a}\cdot \mathbf{b}$ equals $\sum_{i=1}^n a_i b_i$, and their concatenation is denoted by $\mathbf{a}\mathbf{b}$ or $(\mathbf{a}\mid \mathbf{b})$. 
%For an $m\times n$ matrix $M$, we use the notation $M[i]$ to denote its $i^{\text{th}}$ column, for $i\in [n]$, and $M(j)$ to denote its $j^{\text{th}}$ row, for $j\in [m]$. 
For a given real number $r\in \mathbb{R}$, we define $\left \lfloor r\right \rfloor$ and $\left \lceil r\right \rceil$ to be, respectively, the largest integer less than or equal to $r$, and the smallest integer larger than or equal to $r$. %Further, for integers $p$, $r$, and $n$, with $n>0$ and $1\leq r\leq n$, we say that $p\equiv r\ (\text{mod }n)$, or that $r = \text{mod}(p,n)$, if $p = qn+r$, for some integer $q$. 
For a set $P\subseteq \{0,1\}^n$, we use the notation $$\mathds{1}_{P}(\mathbf{x}) = \begin{cases}1,\ \text{if $\mathbf{x}\in P$},\\0,\ \text{otherwise,}\end{cases}$$
and we write $\mathds{1}_P$ as the indicator function $\left(\mathds{1}_P(\mathbf{x}):\ \mathbf{x}\in \{0,1\}^n\right)$. 
%Moreover, for a real $x$, we define the function
%\[
%\text{sgn}(x) := \begin{cases}
%	-1,\ \text{if $x<0$},\\
%	0,\ \text{if $x=0$},\\
%	+1,\ \text{otherwise}.
%\end{cases}
%\]
\subsection{Block Codes and Constrained Sequences}
We recall the following definitions of block codes and linear codes over $\mathbb{F}_2$ (see, for example, Chapters 1 and 2 of \cite{roth_coding_theory}).

\begin{definition}
	An $(n,M)$ block code $\mathcal{C}$ over $\mathbb{F}_2$ is a nonempty subset of $\mathbb{F}_2^n$, with $|\mathcal{C}| = M$. The rate of the block code $\mathcal{C}$ is given by
$
	\text{rate}(\mathcal{C}) := \frac{\log_2 M}{n}.
$
\end{definition}
\begin{definition}
The minimum distance $d(\mathcal{C})$ of a block code $\mathcal{C}$ is the minimum distance between any two distinct codewords of $\mathcal{C}$, i.e., 
$
d(\mathcal{C}) = \min\limits_{\mathbf{c}_1,\mathbf{c}_2\in \mathcal{C}:\ \mathbf{c}_1\neq \mathbf{c}_2} d(\mathbf{c}_1,\mathbf{c}_2).
$
\end{definition}
An $(n,M)$ block code with minimum distance $d$ will be called an $(n,M,d)$ block code. 

Moreover, given a sequence of block codes $\{\mathcal{C}^{(n)}\}_{n\geq 1}$, if rate$(\mathcal{C}^{(n)})\xrightarrow{n\to \infty} R$, for some $R\in [0,1]$, then we say that $\{\mathcal{C}^{(n)}\}_{n\geq 1}$ is of rate $R$.
\begin{definition}
	An $[n,k]$ linear code $\mathcal{C}$ over $\mathbb{F}_2$ is an $(n,2^k)$ block code that is a subspace of $\mathbb{F}_2^n$.
\end{definition}
{The dual code $\mathcal{C}^\perp\subseteq \{0,1\}^n$ of an $[n,k]$ linear code $\mathcal{C}$ over $\mathbb{F}_2$ is defined as $\mathcal{C}^\perp = \{\mathbf{v}\in \{0,1\}^n: \mathbf{v}\cdot \mathbf{c} = 0,\text{ for all }\mathbf{c}\in \mathcal{C}\}$, with the dimension of $\mathcal{C}^\perp$ being $n-k$}. A constraint is represented by a set $\mathcal{A}\subseteq \{0,1\}^n$ of binary words. We call the sequences in $\mathcal{A}$ as \emph{constrained} sequences, and refer to a block code $\mathcal{C}$, all of whose codewords lie in $\mathcal{A}$, as a ``constrained code''. We refer the reader to \cite{Roth} for a detailed exposition on constrained sequences and coding in the presence of constraints. Note that we make no further assumption about the constrained system (such as it being finite-type, almost-finite-type, irreducible, etc.). Given such a collection of sets of constrained sequences $\{\mathcal{A}_n\}_{n\geq 1}$ for each blocklength $n\geq 1$, where $\mathcal{A}_n\subseteq \{0,1\}^n$, for all $n$, the noiseless capacity (see Chapter 3 of \cite{Roth}) of the constraint is defined as
$
C_0:= \limsup\limits_{n\to \infty} \frac{\log_2 |\mathcal{A}_n|}{n}.
$

Further, for a given blocklength $n$, we use the notation $A(n,d;\mathcal{A})$ to denote the size of the largest constrained code, of minimum distance at least $d$, such that all of its codewords lie in $\mathcal{A}$. More formally,
$
A(n,d;\mathcal{A}):= \max\limits_{\substack{\mathcal{C}\subseteq \mathcal{A}:\ d(\mathcal{C})\geq d}} |\mathcal{C}|.
$
For the case where $\mathcal{A} = \{0,1\}^n$, we write $A(n,d;\mathcal{A})$ as simply $A(n,d)$.
%%Consider now a set $S^{(n)}$ of length-$n$ binary words that satisfy the $(d,\infty)$-RLL constraint. We recall the following definition (see, for example, Chapter 3 of \cite{Roth}):
%%\begin{definition}
%%	The noiseless capacity $C_0^{(d)}$ of the $(d,\infty)$-RLL constraint is defined as
%%	\[
%%	C_0^{(d)}:= \lim_{n\to \infty} \frac{\log_2\left\lvert S_{(d,\infty)}^{(n)}\right \rvert}{n} = \inf_n \frac{\log_2\left\lvert S_{(d,\infty)}^{(n)}\right \rvert}{n},
%%	\]
%%	where the last equality follows from the subadditivity of the sequence $\left(\log_2\left\lvert S_{(d,\infty)}^{(n)}\right\rvert\right)_{n\geq 1}$. 
%\end{definition}
\subsection{Fourier Expansions of Functions}
\label{sec:fourierprelim}
Consider functions $f: \{0,1\}^n\to \mathbb{R}$, mapping $\mathbf{x} = (x_1,\ldots,x_n) \in \{0,1\}^n$ to $f(\mathbf{x})\in \mathbb{R}$. If the range of $f$ is $\{0,1\}$, then $f$ is called a Boolean function.  Now, given any function $f: \{0,1\}^n\to \mathbb{R}$ and a vector $\mathbf{s} = (s_1,\ldots,s_n)\in \{0,1\}^n$, we define the Fourier coefficient of $f$ at $\mathbf{s}$ as
\[
\widehat{f}(\mathbf{s}):= \frac{1}{2^n}\sum_{\mathbf{x}\in \{0,1\}^n} f(\mathbf{x})\cdot (-1)^{\mathbf{x}\cdot \mathbf{s}}.
\]
The function $\widehat{f}$ is known as the Fourier transform (sometimes called the Hadamard transform) of $f$. Moreover, the functions $\left(\chi_{\mathbf{s}}: \mathbf{s}\in \{0,1\}^n\right)$, where $\chi_{\mathbf{s}}(\mathbf{x}):=(-1)^{\mathbf{x}\cdot \mathbf{s}}$, form a basis for the vector space $V$ of functions $f: \{0,1\}^n\to \mathbb{R}$. If we define an inner product $\langle \cdot, \cdot \rangle$ over the vector space $V$, as follows:
\[
\langle f,g \rangle := \frac{1}{2^n}\sum_{\mathbf{x}\in \{0,1\}^n} f(\mathbf{x}) g(\mathbf{x}),
\]
for functions $f,g \in V$, we also have that the basis functions $\left(\chi_{\mathbf{s}}: \mathbf{s}\in \{0,1\}^n\right)$ are orthonormal, in that
\[
\langle \chi_{\mathbf{s}},\chi_{\mathbf{s}^\prime} \rangle = 
\begin{cases}
	1,\ \text{if $\mathbf{s} = \mathbf{s}^\prime$},\\
	0,\ \text{otherwise.}
\end{cases}
\]
%The above discussion leads us to the following well-known theorem:
%\begin{theorem}[Theorem 1.1 in \cite{ryanodonnell}]
%	Every function $f: \{0,1\}^n\to \mathbb{R}$ can be uniquely expressed as its Fourier expansion
%	\[
%	f(\mathbf{x}) = \sum_{\mathbf{s}\in \{0,1\}^n} \widehat{f}(\mathbf{s})\cdot (-1)^{\mathbf{x}\cdot \mathbf{s}}.
%	\]
%\end{theorem}
For more details on the Fourier analysis over $\mathbb{F}_2^n$, we refer the reader to \cite{ryanodonnell}. In our paper, we shall make use of Plancherel's Theorem from Fourier analysis, which is recalled below, without proof ({see \cite[Chapter 1, p. 26]{ryanodonnell}}).
\begin{theorem}[Plancherel's Theorem]
	For any $f,g \in \{0,1\}^n\to \mathbb{R}$, we have that
	\[
	\langle f,g \rangle = \sum_{\mathbf{s}\in \{0,1\}^n} \widehat{f}(\mathbf{s})\widehat{g}(\mathbf{s}).
	\]
\end{theorem}
We also recall the operation of convolution of two functions $f,g: \{0,1\}^n\to \mathbb{R}$, defined as
\[
f\star g(\mathbf{x}) = \frac{1}{2^n}\sum_{\mathbf{z}\in \{0,1\}^n} f(\mathbf{z})\cdot g(\mathbf{x}+\mathbf{z}),
\]
where the `$+$' operation in $\mathbf{x}+\mathbf{z}$ above is over vectors in $\mathbb{F}_2^n$. It is well-known (see \cite{ryanodonnell}) that the Fourier transform
$
\widehat{f\star g}(\mathbf{s}) = \widehat{f}(\mathbf{s})\cdot \widehat{g}(\mathbf{s}),
$
%We also recollect Plancherel's Theorem from Fourier analysis, below, without proof.
%\begin{theorem}[Plancherel's Theorem]
%	For any $f,g \in \{0,1\}^n\to \mathbb{R}$, we have that
%	\[
%	\langle f,g \rangle = \sum_{\mathbf{s}\in \{0,1\}^n} \widehat{f}(\mathbf{s})\widehat{g}(\mathbf{s}).
%	\]
%\end{theorem}
for any $\mathbf{s}\in \{0,1\}^n$.

\section{Main Results}
\label{sec:main}
\subsection{Counting Constrained Codewords in Linear Codes}
\label{sec:prelim}
First, we work towards characterizing the number of constrained codewords in an arbitrary linear code. Consider an $[n,k]$ linear code $\mathcal{C}$. Suppose that we are interested in computing the number of codewords $\mathbf{c}\in \mathcal{C}$, each of which satisfies a certain property, which we call a constraint. Let $\mathcal{A}\subseteq \{0,1\}^n$ denote the set of length-$n$ words that respect the constraint. We let $N(\mathcal{C};\mathcal{A})$ denote the number of such constrained codewords in $\mathcal{C}$. We can then write
\begin{align}
	\label{eq:countC}
	N(\mathcal{C};\mathcal{A}) &= \sum_{\mathbf{c}\in \mathcal{C}} \mathds{1}_{\mathcal{A}}(\mathbf{c}).
\end{align}
%where for a set $P$, we use the notation $$\mathds{1}_{P}(\mathbf{x}) = \begin{cases}1,\ \text{if $\mathbf{x}\in P$},\\0,\ \text{otherwise.}\end{cases}$$ 
Observe that the summation in \eqref{eq:countC} is over a set of size $2^k$, which could be quite large, especially when $k>n/2$. Our interest is in obtaining insight into the summation above, by employing a simple trick from the Fourier expansions of Boolean functions. %For a linear code $\mathcal{C}$, we denote its dual code by $\mathcal{C}^{\perp}$.

\begin{theorem}
	\label{thm:lincount}
	Given a linear code $\mathcal{C}$ of blocklength $n$ and a set $\mathcal{A}\subseteq \{0,1\}^n$, we have that
	\[
	N(\mathcal{C};\mathcal{A}) = \left\lvert \mathcal{C}\right\rvert \cdot \sum_{\mathbf{s}\in \mathcal{C}^{\perp}} \widehat{\mathds{1}_\mathcal{A}}(\mathbf{s}).
	\]
\end{theorem}
\begin{proof}
	The proof is a straightforward application of Plancherel's Theorem. Observe that
	\begin{align}
		N(\mathcal{C};\mathcal{A}) &= \sum_{\mathbf{c}\in \mathcal{C}} \mathds{1}_{\mathcal{A}}(\mathbf{c}) \notag\\
		&= \sum_{\mathbf{x}\in \{0,1\}^n} \mathds{1}_{\mathcal{A}}(\mathbf{x})\cdot \mathds{1}_{\mathcal{C}}(\mathbf{x}) 
= 2^n\cdot \sum_{\mathbf{s}\in \{0,1\}^n} \widehat{\mathds{1}_{\mathcal{A}}}(\mathbf{s})\cdot \widehat{\mathds{1}_{\mathcal{C}}}(\mathbf{s}) \label{eq:inter1}.
	\end{align}
	Now, via arguments similar to Lemma 2 in Chapter 5, {p. 127}, of \cite{mws}, we have that
	\begin{equation}
		\label{eq:Ccoeff}
		\widehat{\mathds{1}_{\mathcal{C}}}(\mathbf{s}) = \begin{cases}
			\frac{\left\lvert \mathcal{C} \right\rvert}{2^n},\ \text{if $\mathbf{s}\in \mathcal{C}^\perp$},\\
			0,\ \text{otherwise}.
		\end{cases}
	\end{equation}
%	To prove the above claim, recall that
%	\begin{align*}
%		\widehat{\mathds{1}_{\mathcal{C}}}(\mathbf{s}) = \frac{1}{2^n}\sum_{\mathbf{x}\in \{0,1\}^n}\mathds{1}_{\mathcal{C}}(\mathbf{x})\cdot (-1)^{\mathbf{x}\cdot \mathbf{s}}.
%	\end{align*}
%	
%	If $\mathbf{s}\in \mathcal{C}^\perp$, it holds that $\mathbf{x}\cdot \mathbf{s} = 0$, for all $\mathbf{x}\in \mathcal{C}$. Hence, in this case, we have that $\widehat{\mathds{1}_{\mathcal{C}}}(\mathbf{s}) = 	\frac{\left\lvert \mathcal{C} \right\rvert}{2^n}$. 
%	
%	If $\mathbf{s}\notin \mathcal{C}^\perp$, this means that there is some $\mathbf{x}^\star \in \mathcal{C}$ such that $\mathbf{x}^\star\cdot \mathbf{s} = 1$. Further, it is true that $\mathbf{0}\cdot \mathbf{s} = 0$, with $\mathbf{0}\in \mathcal{C}$, since $\mathcal{C}$ is a linear code. Hence, to any $\mathbf{c}\in \mathcal{C}$ such that $\mathbf{c}\cdot \mathbf{s} = 0$, we can uniquely map the codeword $\mathbf{c}+\mathbf{x}^\star$ (where the summation is over $\mathbb{F}_2^n$), with the property that $\left(\mathbf{c}+\mathbf{x}^\star\right)\cdot \mathbf{s} = 1$. Since this map is bijective, we obtain that the number of codewords $\mathbf{x}\in \mathcal{C}$ such that $(-1)^{\mathbf{x}\cdot \mathbf{s}} = 1$, equals the number of codewords $\mathbf{x}\in \mathcal{C}$ such that $(-1)^{\mathbf{x}\cdot \mathbf{s}} = -1$, for $\mathbf{s}\notin \mathcal{C}^\perp$. Therefore, we get that in this case, $\widehat{\mathds{1}_{\mathcal{C}}}(\mathbf{s}) = 0$.
%	
	Plugging \eqref{eq:Ccoeff} back in \eqref{eq:inter1}, we obtain the statement of the theorem.
%	\begin{align*}
%		N(\mathcal{C};\mathcal{A})&= \left \lvert \mathcal{C}\right\rvert \cdot \sum_{\mathbf{s}\in \mathcal{C}^\perp} \widehat{\mathds{1}_{\mathcal{A}}}(\mathbf{s}).
%	\end{align*}
\end{proof}

Theorem \ref{thm:lincount} provides an alternative approach to addressing our problem of counting constrained codewords in linear codes. In particular, note that if $\mathcal{C}$ had large dimension, i.e., if $k>n/2$, then, it is computationally less intensive to calculate the number of constrained codewords using Theorem \ref{thm:lincount}, provided we knew the Fourier coefficients $\widehat{\mathds{1}_\mathcal{A}}(\mathbf{s})$, since $\text{dim}\left(\mathcal{C}^\perp\right) = n-k<n/2$, in this case. Additionally, if the structure of the Fourier coefficients is simple to handle, we could also use Theorem \ref{thm:lincount} to construct linear codes that have a large (or small) number of constrained codewords, or to obtain estimates of the number of constrained codewords in a fixed linear code. 
%In the rest of the paper, we shall study various examples of constraints, which in turn correspond to sets $\mathcal{A}$, whose Fourier coefficients $\widehat{\mathds{1}_\mathcal{A}}(\mathbf{s})$ are computable either analytically or numerically. 

%In Sections I and II, respectively, of the supplementary material for this manuscript, we provide Fourier-analytic perspectives on the problem of calculating the weight distribution of constrained sequences in the ambient space $\mathbb{F}_2^n$ and in linear codes $\mathcal{C}$. 
In Section I of the supplementary material for this manuscript, we discuss the connection between Theorem \ref{thm:lincount} and the well-known MacWilliams' identities for linear codes \cite{macwilliams}. The material in this section of the supplement is well-known, and we provide it for completeness.

In Section \ref{sec:eg}, we shall look at specific examples of constraints and apply Theorem \ref{thm:lincount} above. In particular, as recurring motifs, we shall consider the $[2^m-1,2^m-1-m]$ binary Hamming code, for $m\geq 1$, and the binary Reed-Muller codes. %Since the constraints we work with are sensitive to the ordering of coordinates of the code, in the sense that a permutation of the coordinates can transform a codeword that does not satisfy the constraint into one that does, 
We shall first fix a canonical ordering of coordinates for the codes that we analyze. For the binary Hamming code, we assume that a parity-check matrix $H_{\text{Ham}}$ is such that its $i^\text{th}$ column is $ \mathbf{B}_m(i)$, for $1\leq i\leq 2^m-1$.

Note that the Reed-Muller (RM) family of codes is known to achieve the capacities of BMS channels under bit-MAP decoding \cite{Reeves} (see also \cite{kud1}). Thus, RM codes are linear codes that offer the maximum resilience to symmetric, stochastic noise, for a given rate. For $m\geq 1$ and $r\leq m$, the $r^\text{th}$-order binary Reed-Muller code RM$(m,r)$ is the set of binary vectors obtained as evaluations of multilinear Boolean polynomials $f(x_1,\ldots,x_m)$ in the variables $x_1,\ldots,x_m$, of maximum degree $r$, on points of the unit hypercube (see Chapter 13 of \cite{mws} for more information on Reed-Muller codes). We use the convention that the coordinates of RM$(m,r)$ are written as binary $m$-tuples that are ordered according to the standard lexicographic ordering, i.e., the $i^{\text{th}}$ coordinate from the start is the $m$-tuple $\mathbf{B}_m(i-1)$, for $1\leq i\leq 2^m$. We thus have that the blocklength of RM$(m,r)$ is $n=2^m$ and dim$(\text{RM}(m,r)) = \sum_{i=0}^r {m\choose i}=:{m\choose \leq r}$.

\subsection{A Linear Program for Constrained Systems}
\label{sec:lp}
In this section, we consider the problem of upper bounding the sizes of constrained codes with a prescribed minimum distance. In particular, we present a linear program (LP) to upper bound $A(n,d;\mathcal{A})$, for any $\mathcal{A}\subseteq \{0,1\}^n$. This LP is based on Delsarte's linear programming approach \cite{delsarte} to bounding from above the value of $A(n,d)$, for $n\geq 1$ and $1\leq d\leq n$. We first recall Delsarte's LP\footnote{The version of Delsarte's LP that is most often used in papers in coding theory, such as in \cite{mrrw}, is obtained after symmetrizing $\mathsf{Del}(n,d)$. In particular, the common version of Delsarte's LP is $\mathsf{Del}_{/S_n}(n,d)$ (see the remark following Theorem \ref{thm:equiv}), where $S_n$ is the symmetry group on $n$ elements.}, which we call $\mathsf{Del}(n,d)$. Given an LP $\mathsf{L}$, we denote by \textsc{{OPT}}$(\mathsf{L})$ its optimal value, and for any feasible solution $f$ of $\mathsf{L}$, we denote the value of the objective function of $\mathsf{L}$ evaluated at $f$ as val$_\mathsf{L}(f)$. The subscript will be omitted when the LP being referred to is clear from the context. We remark here that the LPs in this paper can return  non-integral optimal values, and that integer upper bounds on the sizes of codes can be obtained by suitable rounding of real numbers. The LP $\mathsf{Del}(n,d)$ is given below:
%\begin{center}
%	%	\begin{mybox}[colback=white, width = 0.75\linewidth]{$1,2$}
%	%		This is my box.
%	%	\end{mybox}
%	\begin{tcolorbox}[title={\center $\mathsf{Del}(n,d)$}, title filled = false, width = 0.55\linewidth]
		\begin{align}
			&\underset{f:\ \{0,1\}^n\to \mathbb{R}}{\text{maximize}}\quad \sum_{\mathbf{x}\in \{0,1\}^n} f(\mathbf{x}) \tag{Obj}\\
			&\text{subject to:} \notag\\
			&\ \ f(\mathbf{x})\geq 0,\ \forall\ \mathbf{x}\in \{0,1\}^n, \tag{C1}\\
			&\ \ \widehat{f}(\mathbf{s})\geq 0,\ \forall\ \mathbf{s}\in \{0,1\}^n, \tag{C2}\\
			&\ \ f(\mathbf{x}) = 0,\ \text{if $1\leq w(\mathbf{x})\leq d-1$}, \tag{C3}\\
			&\ \ f(0^n) = 1. \tag{C4}
		\end{align}
%	\end{tcolorbox}
%\end{center}
{We then have the following well-known result (see, for example, Section 3 in \cite{linial1}). We provide a complete proof, since the arguments within lead us to the construction of our LP for constrained systems.}
{
\begin{theorem}
	\label{thm:delnd}
{	The inequality $A(n,d)\leq $ \textsc{{OPT}}$(\text{Del}(n,d))$ holds.}
\end{theorem}
}
\begin{proof}
For any block code $\mathcal{C}$ of blocklength $n$ and minimum distance at least $d$, let $\mathds{1}_\mathcal{C}$ denote its indicator function. 
%If $\mathcal{C}$ does not contain the all-zeros word $0^n$, we translate the codewords of $\mathcal{C}$ so that it does contain the all-zeros word. Note that this translation operation preserves the distances between codewords, and hence the minimum distance of $\mathcal{C}$. 
Let us define
$
f_\mathcal{C} := \frac{2^n}{|\mathcal{C}|}\mathds{1}_\mathcal{C}\star \mathds{1}_\mathcal{C}.
$
We claim that $f_\mathcal{C}$ is a feasible solution for $\mathsf{Del}(n,d)$, with {val$(f_\mathcal{C}) =  |\mathcal{C}|$}. Indeed, observe that (C1) is trivially satisfied, by the definition of the convolution operator. Further, since $\widehat{f_\mathcal{C}} = \frac{2^n}{|\mathcal{C}|}\cdot \widehat{\mathds{1}_\mathcal{C}}^2$, (C2) is satisfied as well. Next, note that (C3) also holds since $\mathcal{C}$ is such that $d(\mathcal{C})\geq d$, then $\mathds{1}_\mathcal{C}(\mathbf{x}+\mathbf{z}) = 0$, for all $\mathbf{z}\in \mathcal{C}$ and any $\mathbf{x}$ such that $1\leq w(\mathbf{x})\leq d-1$. Finally,
\begin{align*}
	f_\mathcal{C}(0^n)&= \frac{1}{|\mathcal{C}|}\sum_{\mathbf{z}\in \{0,1\}^n} \mathds{1}_\mathcal{C}(\mathbf{z})\cdot \mathds{1}_\mathcal{C}(\mathbf{z})\\
	&= \frac{1}{|\mathcal{C}|}\sum_{\mathbf{z}\in \{0,1\}^n} \mathds{1}_\mathcal{C}(\mathbf{z}) = 1,
\end{align*}
thereby satisfying (C4) also. Now, the objective value val$(f_\mathcal{C})$ is given by
\begin{align}
	\sum_{\mathbf{x}\in \{0,1\}^n} f_\mathcal{C}(\mathbf{x})&= \frac{1}{|\mathcal{C}|}\sum_{\mathbf{x}\in \{0,1\}^n} \sum_{\mathbf{z}\in \{0,1\}^n} \mathds{1}_\mathcal{C}(\mathbf{z})\cdot \mathds{1}_\mathcal{C}(\mathbf{x}+\mathbf{z}) \notag\\
	&= \frac{1}{|\mathcal{C}|}\sum_{\mathbf{z}\in \{0,1\}^n}\mathds{1}_\mathcal{C}(\mathbf{z}) \sum_{\mathbf{x}\in \{0,1\}^n} \mathds{1}_\mathcal{C}(\mathbf{x}+\mathbf{z}) = |\mathcal{C}|. \label{eq:temp1}
	%&= \frac{|\mathcal{C}|^2}{|\mathcal{C}|} = 
\end{align}
Hence, it follows that the optimal value of the LP, {$\textsc{{OPT}}(\mathsf{Del}(n,d))\geq |\mathcal{C}|$}, and since this holds for all block codes $\mathcal{C}$ of blocklength $n$ and minimum distance at least $d$, we obtain the statement of the theorem. 
\end{proof}
We refer the reader to \cite{delsarte, mrrw, friedman, sam1, coregliano, linial1} and the references therein for a more detailed treatment of linear programming-based upper bounds on the sizes of block codes and linear codes, and for the derivation of analytical upper bounds via the dual LP or using modern Fourier-theoretic or expander graph-based arguments.

Our LP, which we call $\mathsf{Del}(n,d;\mathcal{A})$, is but a small modification of $\mathsf{Del}(n,d)$, to take into account the fact that all codewords of the code of minimum distance at least $d$, whose size we are attempting to bound, must also lie in the set $\mathcal{A}\subseteq \mathbb{F}_2^n$. The LP $\mathsf{Del}(n,d;\mathcal{A})$ is:
%\begin{center}
%	%	\begin{mybox}[colback=white, width = 0.75\linewidth]{$1,2$}
%	%		This is my box.
%	%	\end{mybox}
%	\begin{tcolorbox}[title={\center $\mathsf{Del}(n,d;\mathcal{A})$}, title filled = false, width = 0.6\linewidth]
		\begin{align}
			&\underset{f:\ \{0,1\}^n\to \mathbb{R}}{\text{maximize}}\quad \sum_{\mathbf{x}\in \{0,1\}^n} f(\mathbf{x}) \tag{Obj$^\prime$}\\
			&\text{subject to:} \notag\\
			&\ \ f(\mathbf{x})\geq 0,\ \forall\ \mathbf{x}\in \{0,1\}^n, \tag{D1}\\
			&\ \ \widehat{f}(\mathbf{s})\geq 0,\ \forall\ \mathbf{s}\in \{0,1\}^n, \tag{D2}\\
			&\ \ f(\mathbf{x}) = 0,\ \text{if $1\leq w(\mathbf{x})\leq d-1$}, \tag{D3}\\
			&\ \ f(0^n)\leq \textsc{{OPT}}(\mathsf{Del}(n,d)), \tag{D4}\\
			&\ \ f(\mathbf{x})\leq 2^n\cdot (\mathds{1}_\mathcal{A}\star \mathds{1}_\mathcal{A})(\mathbf{x}),\ \forall\ \mathbf{x}\in \{0,1\}^n. \tag{D5}
		\end{align}
%	\end{tcolorbox}
%\end{center}
Like in the case with $\mathsf{Del}(n,d)$, we have an upper bound on $A(n,d;\mathcal{A})$ via the optimal value of $\mathsf{Del}(n,d;\mathcal{A})$.
{
\begin{theorem}
	\label{thm:delndA}
{	For any $\mathcal{A}\subseteq \{0,1\}^n$, we have $A(n,d;\mathcal{A})\leq \textsc{{OPT}}(\mathsf{Del}(n,d;\mathcal{A}))^{1/2}$.}
\end{theorem}
}
\begin{proof}
The proof is very similar to that of Theorem \ref{thm:delnd}. Let $\mathcal{C}_\mathcal{A}$ be any length-$n$ constrained code, with $d(\mathcal{C}_\mathcal{A})\geq d$, such that all codewords in $\mathcal{C}_\mathcal{A}$ lie in $\mathcal{A}$. Observe that we can write $\mathcal{C}_\mathcal{A}$ as $\mathcal{C}\cap \mathcal{A}$, for some block (not necessarily constrained) code $\mathcal{C}$, with $d(\mathcal{C})\geq d$. Thus, an upper bound on $\max\limits_{\substack{\mathcal{C}:\ d(\mathcal{C})\geq d}}|\mathcal{C}\cap \mathcal{A}|$ serves as an upper bound on (and in fact, equals) $A(n,d;\mathcal{A})$. 

Let $\mathds{1}_\mathcal{C}$ be the indicator function of a block code $\mathcal{C}$ as above, and let $\mathds{1}_\mathcal{A}$ be the indicator function of the constraint. We define
$
f_{\mathcal{C},\mathcal{A}} := 2^n\cdot (\mathds{1}_\mathcal{C}\mathds{1}_\mathcal{A}\star \mathds{1}_\mathcal{C}\mathds{1}_\mathcal{A}),
$
and claim that $f_{\mathcal{C},\mathcal{A}}$ is a feasible solution for $\mathsf{Del}(n,d;\mathcal{A})$, with the objective function (Obj$^\prime$) evaluating to $|\mathcal{C}\cap \mathcal{A}|^2$. To see this, note that the LP constraints (D1)--(D3) are satisfied for the same reasons as why $f_\mathcal{C}$ satisfied (C1)--(C3) in $\mathsf{Del}(n,d)$ (see the proof of Theorem \ref{thm:delnd}). Furthermore,
\begin{align*}
	f_{\mathcal{C},\mathcal{A}}(0^n) &= |\mathcal{C}\cap \mathcal{A}|\leq |\mathcal{C}|\leq \textsc{{OPT}}(\mathsf{Del}(n,d)),
\end{align*}
since $\mathcal{C}$ is a block code of distance at least $d$. Hence, (D4) is satisfied by $f_{\mathcal{C},\mathcal{A}}$. Finally, observe that for any $\mathbf{x}\in \{0,1\}^n$, 
\begin{align*}
	f_{\mathcal{C},\mathcal{A}}(\mathbf{x}) &= \sum_{\mathbf{z}\in \{0,1\}^n} \mathds{1}_{\mathcal{C}}\mathds{1}_{\mathcal{A}}(\mathbf{z})\cdot \mathds{1}_\mathcal{C}(\mathbf{x}+\mathbf{z})\mathds{1}_\mathcal{A}(\mathbf{x}+\mathbf{z})\\
	&\leq \sum_{\mathbf{z}\in \{0,1\}^n} \mathds{1}_{\mathcal{A}}(\mathbf{z})\cdot \mathds{1}_\mathcal{A}(\mathbf{x}+\mathbf{z}) = 2^n\cdot (\mathds{1}_\mathcal{A}\star \mathds{1}_\mathcal{A})(\mathbf{x}),
\end{align*}
showing that (D5) also holds. Now, note that {val$(f_{\mathcal{C},\mathcal{A}})$ = $\sum_{\mathbf{x}\in \{0,1\}^n} f_{\mathcal{C},\mathcal{A}}(\mathbf{x})$$ = |\mathcal{C}\cap \mathcal{A}|^2$}, by calculations as in \eqref{eq:temp1}. Hence, we have that \textsc{{OPT}}$(\mathsf{Del}(n,d;\mathcal{A}))\geq |\mathcal{C}\cap \mathcal{A}|^2$, for any $\mathcal{C}$ with minimum distance at least $d$. The statement of the theorem then follows.
\end{proof}
{In Section IV of the paper, we obtain numerical upper bounds on the sizes of constrained codes with a given minimum Hamming distance, for a number of constraints, via an application Theorem \ref{thm:delndA} (and Theorem \ref{thm:equiv}). We now discuss a couple of observations about $\mathsf{Del}(n,d;\mathcal{A})$, stated as propositions.}
{
\begin{proposition}
	\label{prop:A}
	{For any $\mathcal{A}\subseteq \{0,1\}^n$, the inequality 
	$
	(\textsc{{OPT}}(\mathsf{Del}(n,d;\mathcal{A})))^{1/2}\leq |\mathcal{A}|
	$ holds.}
\end{proposition}
}
\begin{proof}
 Note that by (D5), for any feasible solution $f$ of $\mathsf{Del}(n,d;\mathcal{A})$, the objective value
\begin{align*}
	\sum_{\mathbf{x}\in \{0,1\}^n} f(\mathbf{x})&\leq \sum_{\mathbf{x}\in \{0,1\}^n} \sum_{\mathbf{z}\in \{0,1\}^n} \mathds{1}_\mathcal{A}(\mathbf{z})\cdot \mathds{1}_\mathcal{A}(\mathbf{x}+\mathbf{z})\\
	&= \sum_{\mathbf{z}\in \{0,1\}^n} \mathds{1}_\mathcal{A}(\mathbf{z}) \sum_{\mathbf{x}\in \{0,1\}^n} \mathds{1}_\mathcal{A}(\mathbf{x}+\mathbf{z})
	= |\mathcal{A}|^2.
\end{align*}
The statement of the proposition then follows.
\end{proof}
\begin{proposition}
	\label{prop:valdelub}
	For any $\mathcal{A}\subseteq \{0,1\}^n$, we have 
	$
	(\textsc{{OPT}}(\mathsf{Del}(n,d;\mathcal{A})))^{1/2}\leq \textsc{{OPT}}(\mathsf{Del}(n,d)).
	$
\end{proposition}

\begin{proof}
	Given the LP $\mathsf{Del}(n,d;\mathcal{A})$, defined by the objective function (Obj$^\prime$) and the constraints (D1)--(D5), we define the new LP $\overline{\mathsf{Del}}(n,d)$ with the same objective function (Obj$^\prime$) and using the constraints (D1)--(D4) alone (excluding constraint (D5)). Thus, $\overline{\mathsf{Del}}(n,d)$ is given by:
	\begin{align}
		&\underset{f:\ \{0,1\}^n\to \mathbb{R}}{\text{maximize}}\quad \sum_{\mathbf{x}\in \{0,1\}^n} f(\mathbf{x}) \tag{Obj$^\prime$}\\
		&\text{subject to:} \notag\\
		&\ \ f(\mathbf{x})\geq 0,\ \forall\ \mathbf{x}\in \{0,1\}^n, \tag{D1}\\
		&\ \ \widehat{f}(\mathbf{s})\geq 0,\ \forall\ \mathbf{s}\in \{0,1\}^n, \tag{D2}\\
		&\ \ f(\mathbf{x}) = 0,\ \text{if $1\leq w(\mathbf{x})\leq d-1$}, \tag{D3}\\
		&\ \ f(0^n)\leq \textsc{{OPT}}(\mathsf{Del}(n,d)), \tag{D4}
	\end{align}

	It is therefore clear that for any $\mathcal{A}\subseteq \{0,1\}^n$, we have \textsc{{OPT}}$(\mathsf{Del}(n,d;\mathcal{A}))\leq \textsc{{OPT}}(\overline{\mathsf{Del}}(n,d))$. We now claim that $\textsc{{OPT}}(\overline{\mathsf{Del}}(n,d))= (\textsc{{OPT}}({\mathsf{Del}}(n,d)))^2$. 
	
	First, we shall show that the inequality in constraint (D4) in $\overline{\mathsf{Del}}(n,d)$ can be replaced with an equality. To see this, suppose that $f$ were an optimal solution to $\overline{\mathsf{Del}}(n,d)$, with $f(0^n)<\textsc{{OPT}}(\mathsf{Del}(n,d))$. Let $c>0$ be such that $c\leq \text{val}(\mathsf{Del}(n,d))-f(0^n)$. We then construct the function $\overline{f}: \{0,1\}^n\to \mathbb{R}$ such that $\overline{f}(0^n) = f(0^n)+c$, and $\overline{f}(\mathbf{x}) = f(\mathbf{x})$, for $\mathbf{x}\neq 0^n$. It can then easily be verified that $\overline{f}$ satisfies constraints (D1), (D3) and (D4). Furthermore, $\overline{f}$ satisfies (D2) also, since by linearity of the Fourier transform, for any $\mathbf{s}\in \{0,1\}^n$,
	\begin{align*}
		\widehat{(\overline{f})}(\mathbf{s}) &= \widehat{f}(\mathbf{s})+c\cdot \widehat{\mathds{1}_{\{0^n\}}}(\mathbf{s})\\
		&= \widehat{f}(\mathbf{s})+\frac{c}{2^n}\geq 0.
	\end{align*}
	Hence, $\overline{f}$ is a feasible solution to $\overline{\mathsf{Del}}(n,d)$, with val$(\overline{f}) = \sum\limits_{\mathbf{x}\in \{0,1\}^n} f(\mathbf{x})+c> \text{val}(f)$, which contradicts the optimality of $f$. Hence, any optimal solution to $\overline{\mathsf{Del}}(n,d)$ must be such that (D4) is satisfied with an equality, and we can thus replace the inequality in (D4) with an equality.
	
	Now, in order to prove that $(\textsc{{OPT}}(\overline{\mathsf{Del}}(n,d)))^{1/2} = \textsc{{OPT}}({\mathsf{Del}}(n,d))$, it suffices to observe that any feasible solution $f$ of ${\mathsf{Del}}(n,d)$ yields a feasible solution $\textsc{{OPT}}(\mathsf{Del}(n,d))\cdot f$, to $\overline{\mathsf{Del}}(n,d)$ (with the inequality in (D4) changed to an equality). Likewise, any feasible solution $f$ of $\overline{\mathsf{Del}}(n,d)$ yields a feasible solution $\frac{f}{\textsc{{OPT}}(\mathsf{Del}(n,d))}$, to ${\mathsf{Del}}(n,d)$. Owing to this bijection, we obtain that $\textsc{{OPT}}(\overline{\mathsf{Del}}(n,d)) = \textsc{{OPT}}({\mathsf{Del}}(n,d))^2$.
	
	Using the fact that \textsc{{OPT}}$(\mathsf{Del}(n,d;\mathcal{A}))\leq \textsc{{OPT}}(\overline{\mathsf{Del}}(n,d))$, we obtain the statement of the proposition.
\end{proof}

{From Propositions \ref{prop:A} and \ref{prop:valdelub}}, we obtain that the size of the largest constrained code with minimum distance at least $d$ obeys $A(n,d;\mathcal{A})\leq $ $ \min\{\textsc{{OPT}}(\mathsf{Del}(n,d)),|\mathcal{A}|\}$, for all constraints represented by $\mathcal{A}\subseteq \{0,1\}^n$. 

\subsection{Symmetrizing  $\mathsf{Del}(n,d;\mathcal{A})$}
\label{sec:sym}
The linear program $\mathsf{Del}(n,d;\mathcal{A})$ discussed in Section \ref{sec:lp}, for a fixed $\mathcal{A}\subseteq \mathbb{F}_2^n$, suffers from the drawback that the variables, which are precisely the values $(f(\mathbf{x}):\ \mathbf{x}\in \{0,1\}^n)$, are $2^n$ in number, i.e., exponentially large in the blocklength. The number of LP constraints, similarly, are exponentially large in $n$. It would therefore be of interest to check if the size of the linear program $\mathsf{Del}(n,d;\mathcal{A})$, which is the sum of the number of variables and the number of LP constraints, can be reduced, using symmetries present in the formulation.

Our exposition in this section on symmetrizing $\mathsf{Del}(n,d;\mathcal{A})$, follows that in \cite{linial1} ({see also \cite{margot} for a more general study of symmetrization procedures and \cite{fazeli} for an application to the generalized sphere packing bounds}). Let $S_n$ denote the symmetric group on $n$ elements, which is the set of all permutations $\sigma: [n]\to [n]$. Note that given a length-$n$ vector $\mathbf{x} = (x_1,\ldots,x_n)\in \{0,1\}^n$, a permutation $\sigma\in S_n$ acts on $\mathbf{x}$ as follows:
$
\sigma \cdot \mathbf{x} = (x_{\sigma(1)},x_{\sigma(2)},\ldots,x_{\sigma(n)}).
$
The permutation $\sigma$ also acts on functions $f: \{0,1\}^n\to \mathbb{R}$ via the mapping $(\sigma \circ f) (\mathbf{x})= f(\sigma\cdot \mathbf{x})$, for $\mathbf{x}\in \{0,1\}^n$. Now, given any set $\mathcal{A}\subseteq \mathbb{F}_2^n$, we define the ``symmetry group'' of the constraint represented by $\mathcal{A}$ to be the set of all permutations $\pi \in S_n$ that leave the indicator function $\mathds{1}_\mathcal{A}$ invariant. In other words, the symmetry group $G_\mathcal{A}$ of the constraint represented by $\mathcal{A}$ is the set of all permutations $\pi \in S_n$ such that $\mathds{1}_\mathcal{A} = \pi \circ \mathds{1}_\mathcal{A}$.

Given a group $G\subseteq S_n$ of permutations, which acts on the vectors $\mathbf{x}\in \{0,1\}^n$, we say that $\mathsf{Del}(n,d;\mathcal{A})$ is $G$-invariant, if for all $\sigma \in G$, we have that if $f: \{0,1\}^n\to \mathbb{R}$ is a feasible solution to $\mathsf{Del}(n,d;\mathcal{A})$, then so is $\sigma \circ f$, with val$(f) = $ val$(\sigma\circ f)$. The following proposition then holds:
{
\begin{proposition}
	\label{prop:GAinvar}
	{$\mathsf{Del}(n,d;\mathcal{A})$ is $G_\mathcal{A}$-invariant.}
\end{proposition}
}
{Before we prove the above proposition, we shall state and prove a simple lemma.}
\begin{lemma}
	\label{lem:perm}
	For any function $f: \{0,1\}^n \to \mathbb{R}$ and for any permutation $\sigma \in S_n$,
	\[
	\widehat{\sigma \circ f}(\mathbf{s}) = (\sigma \circ \widehat{f})(\mathbf{s}), \quad \text{for all $\mathbf{s}\in \{0,1\}^n$}.
	\]
\end{lemma}
\begin{proof}
	Observe that 
	\begin{align*}
		\widehat{\sigma \circ f}(\mathbf{s})&= \sum_{\mathbf{x}\in \{0,1\}^n} f(\sigma\cdot \mathbf{x})\cdot (-1)^{\mathbf{x}\cdot \mathbf{s}}\\
		&= \sum_{\mathbf{x}\in \{0,1\}^n} f(\sigma\cdot \mathbf{x})\cdot (-1)^{(\sigma\cdot\mathbf{x})\cdot (\sigma\cdot\mathbf{s})}\\
		&=\sum_{\mathbf{x}\in \{0,1\}^n} f(\mathbf{x})\cdot (-1)^{\mathbf{x}\cdot (\sigma\cdot\mathbf{s})}
		=  (\sigma \circ \widehat{f})(\mathbf{s}).
	\end{align*}
\end{proof}
{We shall now prove Proposition \ref{prop:GAinvar}.}
\begin{proof}[Proof of Proposition \ref{prop:GAinvar}]
Let $\pi\in G_\mathcal{A}$ be a permutation in the symmetry group of $\mathcal{A}$ and let $f$ be some feasible solution to $\mathsf{Del}(n,d;\mathcal{A})$. {We first show that $\pi \circ f$ is also a feasible solution to $\mathsf{Del}(n,d;\mathcal{A})$.}

\begin{itemize}
	\item[(D1)] It is clear that if $f(\mathbf{x})\geq 0$, then $f(\pi \cdot \mathbf{x})\geq 0$, for all $\mathbf{x}\in \{0,1\}^n$.
	\item[(D2)] The fact that if $\widehat{f}(\mathbf{s})\geq 0$, then $\widehat{\pi\circ f}(\mathbf{s})\geq 0$, for all $\mathbf{s}\geq 0$, follows directly from Lemma \ref{lem:perm}.
	
	\item[(D3)] Since any permutation in $G_\mathcal{A}$ also lies in $S_n$ and hence preserves the weights of vectors in $\{0,1\}^n$, we have $(\pi\circ f)(\mathbf{x}) = 0$, for all $\mathbf{x}\in \{0,1\}^n$ such that $1\leq w(\mathbf{x})\leq d-1$.
	\item[(D4)] This constraint is also satisfied by $\pi \circ f$, since $\pi(0^n) = 0^n$, for all $\pi\in G_\mathcal{A}$.
	\item[(D5)] Observe that for any $\pi\in G_\mathcal{A}$,
	\begin{align*}
		2^n\cdot (\mathds{1}_\mathcal{A}\star \mathds{1}_\mathcal{A})(\pi\cdot \mathbf{x})&= \sum_{\mathbf{z}\in \{0,1\}^n} \mathds{1}_\mathcal{A}(\mathbf{z})\cdot \mathds{1}_\mathcal{A}(\pi\cdot \mathbf{x}+\mathbf{z})\\
		&= \sum_{\mathbf{z}\in \{0,1\}^n} \mathds{1}_\mathcal{A}(\pi\cdot \mathbf{z})\cdot \mathds{1}_\mathcal{A}(\pi\cdot \mathbf{x}+\pi\cdot \mathbf{z})\\
		&= \sum_{\mathbf{z}\in \{0,1\}^n} \mathds{1}_\mathcal{A}(\pi\cdot \mathbf{z})\cdot \mathds{1}_\mathcal{A}(\pi\cdot (\mathbf{x}+ \mathbf{z}))\\
		&= \sum_{\mathbf{z}\in \{0,1\}^n} \mathds{1}_\mathcal{A}(\mathbf{z})\cdot \mathds{1}_\mathcal{A}(\mathbf{x}+ \mathbf{z}) = 2^n\cdot (\mathds{1}_\mathcal{A}\star \mathds{1}_\mathcal{A})(\mathbf{x}).
	\end{align*}
	Hence, since for all $\mathbf{x}\in \{0,1\}^n$, we have that
	\begin{align*}
		\pi\circ f(\mathbf{x})&\leq 2^n\cdot (\mathds{1}_\mathcal{A}\star \mathds{1}_\mathcal{A})(\pi\cdot \mathbf{x})\\
		&= 2^n\cdot (\mathds{1}_\mathcal{A}\star \mathds{1}_\mathcal{A})(\mathbf{x}),
	\end{align*}
	{where the equality holds since $\pi \in G_\mathcal{A}$}, it follows that (D5) is also satisfied by $\pi\circ f$.
\end{itemize}
{Finally, we show that the values of the feasible solutions $f$ and $\pi \circ f$ are identical:}
\begin{itemize}
	\item[(Obj$^\prime$)] It is clear that $\sum_{\mathbf{x}} f(\mathbf{x}) = \sum_{\mathbf{x}} f(\pi\cdot \mathbf{x})$, and hence that val$(f) = \text{val}(\pi\circ f)$.
\end{itemize}
\end{proof}
From the preceding discussion, we see that given a feasible solution $f$ to $\mathsf{Del}(n,d;\mathcal{A})$, we can construct the function
$
\overline{f}:= \frac{1}{|G_\mathcal{A}|}\sum\limits_{\pi \in G_{\mathcal{A}}} \pi\circ f,
$
such that $\overline{f}$ is also a feasible solution to the LP (by linearity), with val$(\overline{f}) = \text{val}(f)$. Observe, in addition, that $\overline{f}$ is such that $\pi\circ \overline{f} = \overline{f}$, for all $\pi\in G_\mathcal{A}$. Now, given a group $H$ of permutations of $n$ elements, we define the equivalence relation `$\sim_H$' as follows: for vectors $\mathbf{x},\mathbf{y}\in \{0,1\}^n$, we say that $\mathbf{x}\sim_H \mathbf{y}$, if $\mathbf{y} = \sigma\cdot \mathbf{x}$, for some $\sigma\in H$. Further, we define the set $\{0,1\}^n/H$ to be the collection of equivalence classes under $\sim_H$, or orbits, given the group $H$. From the above discussion, it follows that in order to arrive at an optimal solution to $\mathsf{Del}(n,d;\mathcal{A})$, one can restrict oneself to searching among feasible solutions $f$ that are constant on each orbit $O$ in $\{0,1\}^n /G_\mathcal{A}$. Such functions $f$ can be expressed as
\begin{equation}
	\label{eq:orbits}
	f(\mathbf{x}) = \sum_{O\in \{0,1\}^n/G_\mathcal{A}} a_O\cdot \mathds{1}_O(\mathbf{x}),
\end{equation}
where $a_O \in \mathbb{R}$, for all $O\in \{0,1\}^n/G_\mathcal{A}$. Before we work on symmetrizing the constraints of $\mathsf{Del}(n,d;\mathcal{A})$, we introduce some notation. For an orbit $O\in \{0,1\}^n/G_\mathcal{A}$, we denote by $|O|$ the number of elements in the orbit and by $\mathbf{x}_O$ (or $\mathbf{s}_O$) a representative element of the orbit. Further, for a given element $\mathbf{x}\in \{0,1\}^n$, we define $O(\mathbf{x})$ to be the orbit in which $\mathbf{x}$ lies. We shall now formulate (D1)--(D5) and the objective function (Obj$^\prime$) in $\mathsf{Del}(n,d;\mathcal{A})$, based on \eqref{eq:orbits}.
\begin{itemize}
	\item[(D1$^\prime$)] The fact that $f(\mathbf{x})\geq 0$ for all $\mathbf{x}$ implies that $a_O\geq 0$, for all $O\in \{0,1\}^n/G_\mathcal{A}$.
	\item[(D2$^\prime$)] By the linearity of the Fourier transform operation, we obtain that 
	\[
	\widehat{f}(\mathbf{s}) = \sum_{O\in \{0,1\}^n/G_\mathcal{A}} a_O\cdot \widehat{\mathds{1}_O}(\mathbf{s})\geq 0,
	\]
	for all $\mathbf{s}\in \{0,1\}^n$.
	
	In fact, note that since $G_\mathcal{A} \subseteq S_n$, it can be argued using Lemma \ref{lem:perm} that the above inequality only needs to hold for orbit representatives $\mathbf{s}_O\in \{0,1\}^n$, of $O\in \{0,1\}^n/G_\mathcal{A}$. Indeed, we have that for any $\pi \in G_\mathcal{A}$, and for functions $f$ as in \eqref{eq:orbits},
	\begin{align*}
		\widehat{f}(\pi\cdot \mathbf{s}) &= \widehat{\pi \circ f}(\mathbf{s})= \widehat{f}(\mathbf{s}), 
	\end{align*}
	where the first equality holds by Lemma \ref{lem:perm} and the second holds since $\pi\circ f = f$.
	\item[(D3$^\prime$)] The constraint (D3) implies that
$
	a_O = 0,\text{ for all $O$ such that $1\leq w(\mathbf{x}_O)\leq d-1$},
$
	where $\mathbf{x}_O\in \{0,1\}^n$ is a representative element of the orbit $O\in \{0,1\}^n/G_\mathcal{A}$.
	\item[(D4$^\prime$)] The constraint (D4) becomes:
$
	a_{O(0^n)} {= a_{0^n}}\leq \textsc{{OPT}}(\mathsf{Del}(n,d)),
$
	where $O(0^n) {= \{0^n\}}$ is the orbit that contains the all-zeros word $0^n$.
	\item[(D5$^\prime$)] Similarly, the constraint (D5) reduces to:
$
	a_O\leq 2^n\cdot(\mathds{1}_\mathcal{A}\star \mathds{1}_\mathcal{A})(\mathbf{x}_O),
$
	where, again, $\mathbf{x}_O$ is some representative element of the orbit $O\in \{0,1\}^n/G_\mathcal{A}$.
	\item[(Obj$^{\prime\prime}$)] From \eqref{eq:orbits}, we see that the new objective function simply becomes
	\[
	\underset{a_O\in \mathbb{R}}{\text{maximize}} \sum_{O\in \{0,1\}^n/G_\mathcal{A}} |O|\cdot a_O.
	\]
\end{itemize}
We call the symmetrized version of $\mathsf{Del}(n,d;\mathcal{A})$ as $\mathsf{Del}_{/G_\mathcal{A}}(n,d;\mathcal{A})$, which is given below.
%\begin{center}
%	%	\begin{mybox}[colback=white, width = 0.75\linewidth]{$1,2$}
%	%		This is my box.
%	%	\end{mybox}
%	\begin{tcolorbox}[title={\center $\mathsf{Del}_{/G_\mathcal{A}}(n,d;\mathcal{A})$}, title filled = false, width = 0.75\linewidth]
		\begin{align}
			&\underset{\{a_O\in \mathbb{R}:\ O\in \{0,1\}^n/G_\mathcal{A}\}}{\text{maximize}} \ \ \sum_{O} |O|\cdot a_O \tag{Obj$^{\prime\prime}$}\\
			&\text{subject to:} \notag\\
			&\ \ a_O\geq 0, \forall\ O\in \{0,1\}^n/G_\mathcal{A}, \tag{D1$^\prime$}\\
			&\ \ \sum_{O\in \{0,1\}^n/G_\mathcal{A}} a_O\cdot \widehat{\mathds{1}_O}(\mathbf{s}_{\tilde{O}})\geq 0,\ \forall\ \text{orbit rep. } \mathbf{s}_{\tilde{O}}\in \{0,1\}^n, \tag{D2$^\prime$}\\
			&\ \ a_O = 0,\ \text{if $1\leq w(\mathbf{x}_O)\leq d-1$}, \tag{D3$^\prime$}\\
			&\ \ {a_{0^n}}\leq \textsc{{OPT}}(\mathsf{Del}(n,d)), \tag{D4$^\prime$}\\
			&\ \ a_O\leq 2^n\cdot (\mathds{1}_\mathcal{A}\star \mathds{1}_\mathcal{A})(\mathbf{x}_O),\ \forall\ O\in \{0,1\}^n/G_\mathcal{A}. \tag{D5$^\prime$}
		\end{align}
%	\end{tcolorbox}
%\end{center}
The preceding discussion can then be summarized as a theorem.
\begin{theorem}
	\label{thm:equiv}
	The LPs $\mathsf{Del}(n,d;\mathcal{A})$ and $\mathsf{Del}_{/G_\mathcal{A}}(n,d;\mathcal{A})$ are equivalent in that
	\[
	\textsc{{OPT}}(\mathsf{Del}(n,d;\mathcal{A})) = \textsc{{OPT}}(\mathsf{Del}_{/G_\mathcal{A}}(n,d;\mathcal{A})).
	\]
\end{theorem}

\begin{remark}
	All the above arguments remain valid if we use a subgroup $H$ of the symmetry group $G_\mathcal{A}$ as well. For the special case when $\mathcal{A} = \{0,1\}^n$, we have $G_\mathcal{A} = S_n$, and we then recover the more common version of Delsarte's LP that is $M_\text{LP}(n,d)$ in \cite{mrrw}. It is this version that we use for evaluating the right-hand side of constraints (D4) and (D4$^\prime$), in our numerical examples.
\end{remark}
Observe that in the symmetrized LP $\mathsf{Del}_{/G_\mathcal{A}}(n,d;\mathcal{A})$, the number of variables is the number $N_\mathcal{A}$ of orbits $O\in \{0,1\}^n/G_\mathcal{A}$ and the number of constraints is at most $4N_\mathcal{A}+1$. Hence, if the constraint is such that the number of orbits $N_\mathcal{A}$ induced by its symmetry group is small (as a function of the blocklength $n$), then the size of the symmetrized LP is small. {In the section that follows}, we shall explicitly write down $\mathsf{Del}_{/G_\mathcal{A}}(n,d;\mathcal{A})$, for select constraints (or sets $\mathcal{A}$), and provide numerical results obtained by running $\mathsf{Del}_{/G_\mathcal{A}}(n,d;\mathcal{A})$ on those constraints.
\section{Examples}
We now take up specific examples of constrained sequences and apply Theorem \ref{thm:lincount} and the LP discussed in Section \ref{sec:main} {(see Theorems \ref{thm:delndA} and \ref{thm:equiv}). We wish to highlight the fact that an application of Theorem \ref{thm:lincount} depends on a study of the Fourier transforms of the indicator function of the constraint, which we shall pursue here.}
%We shall consider select constraints for which the Fourier coefficients of the indicator function that a word satisfies the constraint, are explicitly (or, analytically) computable. For the constraints that we consider in this section, it turns out that there exist straightforward ways of computing their weight distributions, and we hence concentrate on applying Theorem \ref{thm:lincount}, primarily.
\label{sec:eg}
\subsection{$2$-Charge Constraint}
\label{sec:2charge}
In this subsection, we work with a special kind of a spectral null constraint \cite{spectralnull1}, \cite{spectralnull2}. This constraint that we shall study is the so-called $2$-charge constraint (see Section 1.5.4 in \cite{Roth}), whose sequences have a spectral null at zero frequency (such a constraint is also called a DC-free constraint). The $2$-charge constraint admits only sequences $\mathbf{y} \in \{-1,+1\}^n$, whose running sum $\sum_{i=1}^{r}y_i$, for any $1\leq r\leq n$, obeys $0\leq \sum_{i=1}^{r}y_i\leq 2$. %The graph in Figure 1 below represents the constraint, in that all sequences $\mathbf{y} \in \{-1,+1\}^n$ that are $2$-charge constrained can be read off the labels of edges of this graph. The nodes (or states) of the graph represent the values that $\sum_{i=1}^{r}y_i$ can take, for any $1\leq r\leq n$. We take the initial state to be the $0$ state.

To any sequence $\mathbf{x}\in \{0,1\}^n$, we associate (in a one-one manner) the sequence $\mathbf{y} = ((-1)^{x_1},\ldots,(-1)^{x_n})$ $\in \{-1,+1\}^n$. We let $S_2$ denote the set of sequences $\mathbf{x}\in \{0,1\}^n$ such that $\mathbf{y} = ((-1)^{x_1},\ldots,(-1)^{x_n})$ is $2$-charge constrained. Thus, the set of constrained sequences of interest to us is $\mathcal{A} = S_2$. Figure \ref{fig:S2} shows a state transition graph for sequences in the set $S_2$, in that the binary sequences that lie in $S_2$ can be read off the labels of edges in the graph:
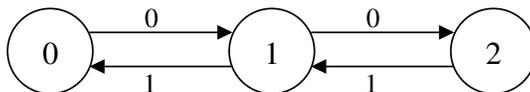
\begin{figure}[!h]
	
	\centering
	\resizebox{0.45\textwidth}{!}{
		\tikzset{every picture/.style={line width=0.75pt}} %set default line width to 0.75pt        
		
		\begin{tikzpicture}[x=0.75pt,y=0.75pt,yscale=-1,xscale=1]
			%uncomment if require: \path (0,300); %set diagram left start at 0, and has height of 300
			
			%Shape: Circle [id:dp5846761752224807] 
			\draw   (320,142) .. controls (320,128.19) and (331.19,117) .. (345,117) .. controls (358.81,117) and (370,128.19) .. (370,142) .. controls (370,155.81) and (358.81,167) .. (345,167) .. controls (331.19,167) and (320,155.81) .. (320,142) -- cycle ;
			%Shape: Circle [id:dp512684865292591] 
			\draw   (450,142) .. controls (450,128.19) and (461.19,117) .. (475,117) .. controls (488.81,117) and (500,128.19) .. (500,142) .. controls (500,155.81) and (488.81,167) .. (475,167) .. controls (461.19,167) and (450,155.81) .. (450,142) -- cycle ;
			%Shape: Circle [id:dp047638014472946244] 
			\draw   (580,142) .. controls (580,128.19) and (591.19,117) .. (605,117) .. controls (618.81,117) and (630,128.19) .. (630,142) .. controls (630,155.81) and (618.81,167) .. (605,167) .. controls (591.19,167) and (580,155.81) .. (580,142) -- cycle ;
			%Straight Lines [id:da5133653943424985] 
			\draw    (368,131) -- (449,131) ;
			\draw [shift={(452,131)}, rotate = 180] [fill={rgb, 255:red, 0; green, 0; blue, 0 }  ][line width=0.08]  [draw opacity=0] (8.93,-4.29) -- (0,0) -- (8.93,4.29) -- cycle    ;
			%Straight Lines [id:da33455090964926226] 
			\draw    (498,131) -- (579,131) ;
			\draw [shift={(582,131)}, rotate = 180] [fill={rgb, 255:red, 0; green, 0; blue, 0 }  ][line width=0.08]  [draw opacity=0] (8.93,-4.29) -- (0,0) -- (8.93,4.29) -- cycle    ;
			%Straight Lines [id:da7189810304189588] 
			\draw    (501,151) -- (582,151) ;
			\draw [shift={(498,151)}, rotate = 0] [fill={rgb, 255:red, 0; green, 0; blue, 0 }  ][line width=0.08]  [draw opacity=0] (8.93,-4.29) -- (0,0) -- (8.93,4.29) -- cycle    ;
			%Straight Lines [id:da665262365964669] 
			\draw    (371,151) -- (452,151) ;
			\draw [shift={(368,151)}, rotate = 0] [fill={rgb, 255:red, 0; green, 0; blue, 0 }  ][line width=0.08]  [draw opacity=0] (8.93,-4.29) -- (0,0) -- (8.93,4.29) -- cycle    ;
			
			% Text Node
			\draw (339,135.4) node [anchor=north west][inner sep=0.75pt]  [font=\large]  {$0$};
			% Text Node
			\draw (469,135.4) node [anchor=north west][inner sep=0.75pt]  [font=\large]  {$1$};
			% Text Node
			\draw (599,135.4) node [anchor=north west][inner sep=0.75pt]  [font=\large]  {$2$};
			% Text Node
			\draw (399,115.4) node [anchor=north west][inner sep=0.75pt]    {$0$};
			% Text Node
			\draw (529,115.4) node [anchor=north west][inner sep=0.75pt]    {$0$};
			% Text Node
			\draw (528,154.4) node [anchor=north west][inner sep=0.75pt]    {$1$};
			% Text Node
			\draw (398,154.4) node [anchor=north west][inner sep=0.75pt]    {$1$};

		\end{tikzpicture}
	}
	\caption{State transition graph for sequences in the set $S_2$.}
	\label{fig:S2}
\end{figure}

%Let the labelled, directed graph $G = (V,E,\mathcal{L})$ represent the state transition graph above, with $V = \{0,1,2\}$ being the set of states, $E\subseteq V\times V$ being the set of directed edges, and $\mathcal{L}:E\to \{0,1\}$ being the labelling function that assigns to each edge a label that is $0$ or $1$. For example, $\mathcal{L}((1,2)) = 0$ and $\mathcal{L}((1,0)) = 1$. 
We assume that the initial state is $v_0 = 0$. Since labels of paths in the state transition graph (beginning at state $0$) correspond to binary sequences $\mathbf{x}\in S_2$, we denote by $x_i$ the label of the $i^{\text{th}}$ edge in the path. Observe that $x_1 = 0$, by our choice of initial state. Further, for a given path in the graph, we let $v_i$ denote the $i^{\text{th}}$ state, which is the terminal state of the $i^{\text{th}}$ edge. %Hence, given such a state transition graph, where the labels of distinct outgoing edges from a state are different, one can define a transition function $\phi:V\times \{0,1\}\to V$, with the property that $v_i=\phi(v_{i-1},x_i)$, for $1\leq i\leq n$.

{Further, we claim that $\left\lvert S_2\right \rvert = 2^{\left\lfloor \frac{n}{2}\right \rfloor}$}. To see this, observe that for any $\mathbf{x}\in S_2$, the state $v_{2i-1}$, for any $1\leq i\leq n$, equals $1$. Owing to this fact, the label of the $j^\text{th}$ edge, $x_j$, in any path in the graph $G$ in Figure \ref{fig:S2}, can be either $0$ or $1$, when $j = 2i$, and is fixed to be exactly one of $0$ or $1$, when $j=2i+1$, based on the label of the $(j-1)^\text{th}$ edge, for $1\leq j\leq n$. In particular, it holds that $x_{2i} +x_{2i+1} =1$, for all {$1\leq i\leq \left\lfloor \frac{n-1}{2}\right \rfloor$}, with $x_1$ fixed to be $0$. %From this observation, we see that for blocklength $n$, it holds that 

\subsubsection{Computation of Fourier Transform}

We now state a lemma that completely determines the Fourier transform of $\mathds{1}_{S_2}$. But before we do so, we need some more notation: we define the set of vectors $\mathcal{B} = \left\{\mathbf{b}_0,\mathbf{b}_1,\ldots,\mathbf{b}_{\left\lceil \frac{n}{2} \right\rceil - 1}\right\}$, where $\mathbf{b}_0 = 10^{n-1}$ and for $1\leq i\leq \left\lceil \frac{n}{2} \right\rceil - 1$, the vector $\mathbf{b}_i$ is such that $b_{i,j} = 1$, for $j\in \{2i,2i+1\}$, and $b_{i,j} = 0$, otherwise. For example, when $n=5$, we have that $\mathcal{B} = \{10000,01100,00011\}$. Let $V_{\mathcal{B}} = \text{span}(\mathcal{B})$. In what follows, we assume that $n\geq 3$.

\begin{lemma}
	\label{lem:fcoeff2c}
	For $n\geq 3$ and for $\mathbf{a} = \left(a_0,a_1,\ldots,a_{{\left\lceil \frac{n}{2} \right\rceil - 1}}\right)\in \{0,1\}^{\left\lceil \frac{n}{2} \right\rceil}$, consider $\mathbf{s} = \sum\limits_{i=0}^{{\left\lceil \frac{n}{2} \right\rceil - 1}} a_i\cdot \mathbf{b}_i$ (where the summation is over $\mathbb{F}_2^n$). Then,
	\[
	\widehat{\mathds{1}_{S_{2}}}\left(\mathbf{s}\right)= 2^{\left\lfloor \frac{n}{2} \right\rfloor - n}\cdot (-1)^{w(\mathbf{a})-a_0}.
	\]
	Further, for $\mathbf{s}\notin V_{\mathcal{B}}$, we have that $\widehat{\mathds{1}_{S_{2}}}\left(\mathbf{s}\right) = 0$.
\end{lemma}
\begin{proof}
	First, we note that for any $\mathbf{s}\in \{0,1\}^n$, 
	\begin{align}
	\widehat{\mathds{1}_{S_2}}(\mathbf{s}) &= \frac{1}{2^n}\sum_{\mathbf{x}\in \{0,1\}^n} \mathds{1}_{S_2}(\mathbf{x})\cdot (-1)^{\mathbf{x}\cdot \mathbf{s}} \notag\\
	&= 2^{-n}\cdot\left(\#\{\mathbf{x}\in S_2:\ w_{\mathbf{s}}(\mathbf{x})\text{ is even}\} - \#\{\mathbf{x}\in S_2:\ w_{\mathbf{s}}(\mathbf{x})\text{ is odd}\}\right) \label{eq:inter3}.
	\end{align}
Now, for $\mathbf{s} = \mathbf{b}_0$, note that since all words $\mathbf{x}\in S_2$ have $x_1 = 0$, we obtain that $\widehat{\mathds{1}_{S_2}}(\mathbf{b}_0) = 2^{-n}\cdot \left\lvert S_2\right \rvert = 2^{\left\lfloor \frac{n}{2} \right\rfloor - n}$.
%have that $\#\{\mathbf{x}\in S_2:\ w_{\mathbf{s}}(\mathbf{x})\text{ is even}\} = \left\lvert S_2\right \rvert = 2^{\left\lfloor \frac{n}{2}\right \rfloor}$, and $\#\{\mathbf{x}\in S_2:\ w_{\mathbf{s}}(\mathbf{x})\text{ is odd}\} = 0$. Plugging back in \eqref{eq:inter3}, we get that $\widehat{\mathds{1}_{S_2}}(\mathbf{b}_0) = 2^{\left\lfloor \frac{n}{2} \right\rfloor - n}$. 

Further, recall that since $v_{2i-1} = 1$, for any $1\leq i\leq n$, we have that $x_{2i}+x_{2i+1} = 1$ (over $\mathbb{F}_2$). Hence, we see that for $\mathbf{s} = \mathbf{b}_j$, for $1\leq j\leq \left\lceil \frac{n}{2} \right\rceil-1$, it is true that $\#\{\mathbf{x}\in S_2:\ w_{\mathbf{s}}(\mathbf{x})\text{ is odd}\} = \left\lvert S_2\right \rvert = 2^{\left\lfloor \frac{n}{2}\right \rfloor}$ and $\#\{\mathbf{x}\in S_2:\ w_{\mathbf{s}}(\mathbf{x})\text{ is even}\} = 0$. Substituting in \eqref{eq:inter3}, we get that $\widehat{\mathds{1}_{S_2}}(\mathbf{b}_j) = -2^{\left\lfloor \frac{n}{2} \right\rfloor - n}$ for all $1\leq j\leq \left\lceil \frac{n}{2} \right\rceil-1$. Furthermore, we claim that $\widehat{\mathds{1}_{S_{2}}}(0^n) = 2^{\left\lfloor \frac{n}{2} \right\rfloor - n}$. To see this, note that
\begin{align*}
	\widehat{\mathds{1}_{S_{2}}}(0^n) &{= \frac{1}{2^n}\sum_{\mathbf{x}\in S_2} 1 = \frac{|S_2|}{2^n} = 2^{\left\lfloor \frac{n}{2} \right\rfloor - n}}.
\end{align*}

Now, suppose that for some $\mathbf{s}_1, \mathbf{s}_2 \in V_\mathcal{B}$, we have $\widehat{\mathds{1}_{S_2}}(\mathbf{s}_1) = (-1)^{i_1}\cdot 2^{\left\lfloor \frac{n}{2} \right\rfloor - n}$ and $\widehat{\mathds{1}_{S_2}}(\mathbf{s}_2) = (-1)^{i_2}\cdot 2^{\left\lfloor \frac{n}{2} \right\rfloor - n}$, for some $i_1,i_2\in \{0,1\}$, with $w_{\mathbf{s}_1}(\mathbf{x})$ being even for all $\mathbf{x}\in S_2$, if $i_1 = 0$, and odd otherwise {(similar arguments hold for $w_{\mathbf{s}_2}(\mathbf{x})$). Hence, it can be checked that if $i_1 = i_2$, it holds that $w_{\mathbf{s}_1+\mathbf{s}_2}(\mathbf{x})$ is even, and hence, $\widehat{\mathds{1}_{S_2}}(\mathbf{s}_1+\mathbf{s}_2) = 2^{\left\lfloor \frac{n}{2} \right\rfloor - n} = (-1)^{i_1+i_2}\cdot 2^{\left\lfloor \frac{n}{2} \right\rfloor - n}$, and similarly if $i_1\neq i_2$ as well}. By applying this fact iteratively, and using the expressions for the Fourier coefficients $\widehat{\mathds{1}_{S_2}}(\mathbf{b}_j)$, for $0\leq j\leq \left\lceil \frac{n}{2} \right\rceil-1$, we obtain the first part of the lemma.

To show that $\widehat{\mathds{1}_{S_{2}}}\left(\mathbf{s}\right) = 0$ for $\mathbf{s}\notin V_{\mathcal{B}}$, we use Plancherel's Theorem again. Note that
\begin{align}
	\frac{\left\lvert S_2 \right \rvert}{2^n} &\stackrel{}{=} \frac{1}{2^n}\sum_{\mathbf{x}\in \{0,1\}^n} \mathds{1}_{S_2}(\mathbf{x}) \notag\\
	&\stackrel{{(a)}}{=} \frac{1}{2^n}\sum_{\mathbf{x}\in \{0,1\}^n} \mathds{1}_{S_2}^2(\mathbf{x}) \notag\\
	&\stackrel{{(b)}}{=} \sum_{\mathbf{s}\in \{0,1\}^n} \left(\widehat{\mathds{1}_{S_{2}}}(\mathbf{s})\right)^2
	\stackrel{}{=} \sum_{\mathbf{s}\in V_\mathcal{B}} \left(\widehat{\mathds{1}_{S_{2}}}(\mathbf{s})\right)^2+\sum_{\mathbf{s}\notin V_\mathcal{B}} \left(\widehat{\mathds{1}_{S_{2}}}(\mathbf{s})\right)^2 \label{eq:inter4},
\end{align}
where (a) holds since $\mathds{1}_{S_2}$ is a boolean function, and (b) holds by Plancherel's Theorem. 
%Now, by equating the left side of equality (a) and the right side of equality (d), we see that
%\begin{equation}
%	\label{eq:inter4}
%\frac{\left\lvert S_2 \right \rvert}{2^n} = \sum_{\mathbf{s}\in V_\mathcal{B}} \left(\widehat{\mathds{1}_{S_{2}}}(\mathbf{s})\right)^2+\sum_{\mathbf{s}\notin V_\mathcal{B}} \left(\widehat{\mathds{1}_{S_{2}}}(\mathbf{s})\right)^2.
%\end{equation}
However, from the first part of the lemma, we get that 
\begin{align*}
	\sum_{\mathbf{s}\in V_\mathcal{B}} \left(\widehat{\mathds{1}_{S_{2}}}(\mathbf{s})\right)^2 &= \left\lvert V_\mathcal{B} \right \rvert \cdot 2^{2\cdot\left(\left\lfloor \frac{n}{2}\right \rfloor - n\right)}\\
	&\stackrel{(c)}{=} 2^{\left\lceil \frac{n}{2}\right \rceil}\cdot 2^{2\cdot\left(\left\lfloor \frac{n}{2}\right \rfloor - n\right)}
	= 2^{-\left\lceil \frac{n}{2}\right \rceil} = \frac{\left\lvert S_2 \right \rvert}{2^n},
%	\begin{cases}
%		2^{-\frac{n}{2}},\ \text{if $n$ is even},\\
%		2^{-\left(\frac{n+1}{2}\right)},\ \text{if $n$ is odd,}
%	\end{cases}
%&= \frac{\left\lvert S_2 \right \rvert}{2^n},
\end{align*}
where equality {(c)} follows from the fact that $\left\lvert V_\mathcal{B} \right \rvert =  2^{\left\lceil \frac{n}{2}\right \rceil}$, since $V_\mathcal{B} = \text{span}(\mathcal{B})$ and the vectors in $\mathcal{B}$ are linearly independent. Hence, plugging back in \eqref{eq:inter4}, we obtain that $\sum_{\mathbf{s}\notin V_\mathcal{B}} \left(\widehat{\mathds{1}_{S_{2}}}(\mathbf{s})\right)^2 = 0$, implying that $\widehat{\mathds{1}_{S_{2}}}(\mathbf{s})= 0$, for all $\mathbf{s}\notin V_\mathcal{B}$.
\end{proof}

\subsubsection{Insights from Lemma \ref{lem:fcoeff2c}}

Lemma \ref{lem:fcoeff2c} informs the construction of linear codes $\mathcal{C}$ that have a large number of codewords $\mathbf{c}\in S_2$. In particular, note that from Theorem \ref{thm:lincount}, we have that
\begin{align}
N(\mathcal{C};S_2) &= |\mathcal{C}|\cdot \sum_{\mathbf{s}\in \mathcal{C}^\perp} \widehat{\mathds{1}_{S_{2}}}(\mathbf{s})\notag\\
&= |\mathcal{C}|\cdot \sum_{\mathbf{s}\in \mathcal{C}^\perp \cap V_{\mathcal{B}}} \widehat{\mathds{1}_{S_{2}}}(\mathbf{s}) \label{eq:inter5},
\end{align}
where, for $\mathbf{s}\in \mathcal{C}^\perp \cap V_{\mathcal{B}}$, with $\mathbf{s} = \sum_{i=0}^{\left \lceil \frac{n}{2} \right\rceil -1} a_i\cdot \mathbf{b}_i$ for some $\mathbf{a} = \left(a_0,a_1,\ldots,a_{{\left\lceil \frac{n}{2} \right\rceil - 1}}\right)\in \{0,1\}^{\left\lceil \frac{n}{2} \right\rceil}$, we have that $\widehat{\mathds{1}_{S_{2}}}(\mathbf{s}) =  2^{\left\lfloor \frac{n}{2} \right\rfloor - n}\cdot \left((-1)^{\sum_{j=1}^{{\left\lceil \frac{n}{2} \right\rceil}-1} a_j}\right)$. Now, suppose that $n\geq 3$ and $\mathcal{C}$ is such that $\mathcal{C}^\perp$ \emph{does not} satisfy the criterion (\textbf{C}) below:
\begin{align*}
	\text{(\textbf{C})}\quad \text{For all $\mathbf{s}\in  \mathcal{C}^\perp \cap V_{\mathcal{B}}$, we have $\widehat{\mathds{1}_{S_{2}}}(\mathbf{s})\geq 0$.}
\end{align*}

If (\textbf{C}) does not hold, then, it implies that for some $\mathbf{s}^\star\in  \mathcal{C}^\perp \cap V_{\mathcal{B}}$, it must be that $\widehat{\mathds{1}_{S_{2}}}(\mathbf{s}^\star)<0$. Hence, following the reasoning in the proof of Lemma \ref{lem:fcoeff2c}, since $\mathcal{C}^\perp \cap V_{\mathcal{B}}$ is a vector space, we have that via the map $\mathbf{s} \mapsto \mathbf{s}+\mathbf{s}^\star$, the number of elements $\mathbf{s}\in \mathcal{C}^\perp \cap V_{\mathcal{B}}$ such that $\widehat{\mathds{1}_{S_{2}}}(\mathbf{s}) <0$ equals the number of elements $\mathbf{s}\in \mathcal{C}^\perp \cap V_{\mathcal{B}}$ such that $\widehat{\mathds{1}_{S_{2}}}(\mathbf{s})>0$. Furthermore, since $\left\lvert \widehat{\mathds{1}_{S_{2}}}(\mathbf{s})\right \rvert  = 2^{\left\lfloor \frac{n}{2} \right\rfloor - n}$, for all $\mathbf{s}\in \mathcal{C}^\perp \cap V_{\mathcal{B}}$, we get from \eqref{eq:inter5} that $N(\mathcal{C};S_2) = 0$, in this case.

Hence, in order to construct linear codes $\mathcal{C}$ such that $N(\mathcal{C},S_2) > 0$, we require that criterion (\textbf{C}) is indeed satisfied by the dual code $\mathcal{C}^\perp$ of $\mathcal{C}$, with $\widehat{\mathds{1}_{S_{2}}}(\mathbf{s}^\star) >0$, for some $\mathbf{s}^\star\in \mathcal{C}^\perp$. With this instruction in mind, we can construct linear codes $\mathcal{C}$ such that its dual code $\mathcal{C}^\perp$ contains $t$ linearly independent vectors $\left(\mathbf{s}_1,\ldots,\mathbf{s}_t\right)$ with $\widehat{\mathds{1}_{S_{2}}}(\mathbf{s}_i)>0$, for all $1\leq i\leq t$, and no vectors $\mathbf{s}\in V_{\mathcal{B}}$ with $\widehat{\mathds{1}_{S_{2}}}(\mathbf{s})<0$. In such a case, we obtain that
$
N(\mathcal{C};S_2) = |\mathcal{C}|\cdot 2^{t+\left\lfloor \frac{n}{2} \right\rfloor - n}.
$
From the structure of $V_\mathcal{B}$, we see that the largest number of vectors $\mathbf{s}\in \{0,1\}^n$ such that $\widehat{\mathds{1}_{S_{2}}}(\mathbf{s})>0$, equals $\frac{\left\lvert V_\mathcal{B}\right\rvert}{2} = 2^{\left\lceil \frac{n}{2} \right\rceil - 1}$. Hence, the largest number of linearly independent vectors $t$ as above, is $\left\lceil \frac{n}{2} \right\rceil - 1$.
The discussion above is summarized below as a lemma.
\begin{lemma}
	\label{lem:criterion2c}
	For any linear code $\mathcal{C}$ of blocklength $n\geq 3$, the following are true:
	\begin{enumerate}
		\item If criterion (\textbf{C}) is not satisfied, then, $N(\mathcal{C},S_2) = 0$.
		\item If criterion (\textbf{C}) is satisfied and there exist $t_n\in \left[1:\left\lceil \frac{n}{2} \right\rceil - 1\right]$ linearly independent vectors $\left(\mathbf{s}_1,\ldots,\mathbf{s}_{t_n}\right)$ in $\mathcal{C}^\perp$ with $\widehat{\mathds{1}_{S_{2}}}(\mathbf{s}_i)>0$, for all $1\leq i\leq t_n$, then, $N(\mathcal{C};S_2) = |\mathcal{C}|\cdot 2^{t_n+\left\lfloor \frac{n}{2} \right\rfloor - n}$.
	\end{enumerate}
\end{lemma}

We thus understand that given a linear code whose dual code satisfies item 2 of Lemma \ref{lem:criterion2c}, the rate of the largest constrained subcode, $\mathcal{C}_2$, of $\mathcal{C}$, all of whose codewords are in $S_2$, obeys
\begin{align*}
\text{rate}\left(\mathcal{C}_2\right) &= \frac{\log_2 N(\mathcal{C};S_2)}{n}
%&=\frac{\log_2\left(|\mathcal{C}|\cdot 2^{t_n+\left\lfloor \frac{n}{2} \right\rfloor - n}\right)}{n}\\
= \frac{\log_2\left(|\mathcal{C}|\right)}{n}+\frac{t_n+\left\lfloor \frac{n}{2} \right\rfloor - n}{n}.
\end{align*}
In particular, given a sequence of linear codes $\left\{\mathcal{C}^{(n)}\right\}_{n\geq 1}$ satisfying item 2 of Lemma \ref{lem:criterion2c}, if rate$(\mathcal{C}^{(n)})\xrightarrow{n\to \infty} R\in (0,1)$, then, the rate of their largest constrained subcodes $\left\{\mathcal{C}_2^{(n)}\right\}_{n\geq 1}$, all of whose codewords are in $S_2$, obeys
\begin{equation}
	\label{eq:rate2c}
\liminf_{n\to \infty}\text{rate}\left(\mathcal{C}_2^{(n)}\right) = R-\frac12 + \liminf_{n\to \infty} \frac{t_n}{n}.
\end{equation}

{By arguments similar to those in \cite{pvk}}, we obtain that for the constraint identified by the set $S_2$, there exist cosets of the linear codes $\left\{\mathcal{C}^{(n)}\right\}_{n\geq 1}$ with rate$(\mathcal{C}^{(n)})\xrightarrow{n\to \infty} R$, the rate of the constrained subcodes of which (in the limit as the blocklength goes to infinity) is at least $R-\frac12$. From \eqref{eq:rate2c}, since $t_n\in \left[1:\left\lceil \frac{n}{2} \right\rceil - 1\right]$, we see that we can construct a sequence of linear codes whose {$2$-charge} constrained subcodes are of rate larger than or equal to the coset-averaging lower bound in \cite{pvk}. In other words, it is possible to achieve the coset-averaging rate lower bound {for the $2$-charge constraint} (and potentially more) by using the linear code itself, instead of one of its cosets. 

Specifically, suppose that we choose $t_n = \left\lceil \frac{n}{2}\right\rceil - p_n$, for some positive integer $p_n$ such that $\lim_{n\to \infty} \frac{p_n}{n} = 0$, thereby making dim$\left(\mathcal{C}_n^\perp\right)\geq \frac{\left\lceil \frac{n}{2}\right\rceil - p_n}{n}$, where $\mathcal{C}_n^\perp$ is the dual code of $\mathcal{C}_n$. Note that this implies that $1-R = \lim_{n\to \infty} \text{rate}\left(\mathcal{C}^\perp\right)\geq \frac12$, and hence that $R\in (0,\frac12]$. In this case, by plugging into \eqref{eq:rate2c}, we obtain that the rate of the largest constrained subcodes $\left\{\mathcal{C}_2^{(n)}\right\}_{n\geq 1}$ of $\left\{\mathcal{C}^{(n)}\right\}_{n\geq 1}$ is 
\begin{align*}
	\lim_{n\to \infty}\text{rate}\left(\mathcal{C}_2^{(n)}\right) &= R-\frac12 + \lim_{n\to \infty} \frac{t_n}{n}= R.
\end{align*}
In other words, in the case where $t_n = \left\lceil \frac{n}{2}\right\rceil - p_n$, for $p_n>0$ as above, the asymptotic rate of the codewords that lie in $S_2$ equals the asymptotic rate $R\in (0,\frac12]$ of the code itself.

\subsubsection{Application to Specific Linear Codes}

Next, we shall make use of Theorem \ref{thm:lincount} to compute the number of codewords of specific linear codes $\mathcal{C}$, which lie in $S_2$. 
First, we shall apply our results to the $[2^m-1,2^m-1-m]$ binary Hamming code, for $m\geq 3$. We shall use the coordinate ordering discussed in Section \ref{sec:prelim}. {The proof of the corollary below is provided in Section II of the supplementary material.}

\begin{corollary}
	\label{cor:s2hamming}
	For $m\geq 3$ and for $\mathcal{C}$ being the $[2^m-1,2^m-1-m]$ Hamming code, we have that $N(\mathcal{C};S_2) = 2^{\left\lfloor \frac{2^m-1}{2}\right \rfloor-1}$.
\end{corollary}

Note that in Corollary \ref{cor:s2hamming}, the number of constrained codewords in the linear codes {is} half the total number of constrained codewords, $2^{\left\lfloor \frac{n}{2}\right \rfloor}$, of the same blocklength $n$ as the codes under consideration. However, in the limit as the blocklength goes to infinity, the rates of the subcodes of the single parity-check and Hamming codes that lie in $S_2$, equal the noiseless capacity $C_0$ of the constraint, which in turn equals $\frac12$. 
 %which are codes of potentially large blocklength. Recall that for $m\geq 1$ and $r\leq m$, the $r^\text{th}$-order binary Reed-Muller code RM$(m,r)$ is the set of binary vectors obtained as evaluations of multilinear Boolean polynomials $f(x_1,\ldots,x_m)$, in the variables $x_1,\ldots,x_m$, of maximum degree $r$, on points of the unit hypercube (see Chapter 13 of \cite{mws} for more information on Reed-Muller codes). We use the convention that the coordinates of the Reed-Muller code are ordered according to the standard lexicographic ordering. We thus have that the blocklength of RM$(m,r)$ is $n=2^m$ and dim$(\text{RM}(m,r)) = \sum_{i=0}^r {m\choose i}=:{m\choose \leq r}$. 
 
 We then move on to counting constrained codewords in the Reed-Muller (RM) family of codes.
 Using the structure of Fourier coefficients given in Lemma \ref{lem:fcoeff2c} and using the fact that the dual code of RM$(m,r)$ is the code RM$(m,m-r-1)$, for $r\leq m-1$, we numerically calculate the number of constrained codewords $N(\text{RM}(m,r);S_2)$, for certain (large) values of $m$ and $r$. Our results are documented in Table~\ref{tab:rm}. Note that the computational technique in Theorem \ref{thm:lincount} proves particularly useful when the rate of RM$(m,r)$ is larger than $\frac12$, or equivalently, when $r>\left\lceil \frac{m}{2}\right\rceil$. 
 {{The algorithm we have used for generating the entries in  Table~\ref{tab:rm}, as an illustration of the application of Theorem~\ref{thm:lincount}, simply plugs in the Fourier coefficients from Lemma \ref{lem:fcoeff2c}. The time complexity of this algorithm is thus $O(n\cdot |\mathcal{C}^\perp|)$ for $\mathcal{C} = $ RM$(m,r)$.}
 
 {More generally though, observe that the constraint that a word $\mathbf{x}\in \{0,1\}^n$ lies in $S_2$ can be represented by the set of linear equations (over $\mathbb{F}_2$) given by $x_1 = 0$ and $x_{2i} +x_{2i+1} =1$, for all {$1\leq i\leq \left\lfloor \frac{n-1}{2}\right \rfloor$}. Furthermore, any linear code $\mathcal{C}$ of dimension $k$ and parity-check matrix $H$ is such that its codewords $\mathbf{c}$ are solutions to $H\cdot \mathbf{c}^T = 0^{n-k}$. The $2$-charge  constrained codewords in $\mathcal{C}$ can thus be represented as solutions to a system of linear equations over $\mathbb{F}_2$, and the number of such solutions can be determined by Gaussian elimination, in time that is polynomial in the blocklength $n$.}

\begin{table}[t!]
	\centering
	\begin{tabular}{||c || c||c||c||c||c||c||} 
		\hline
		$(m,r)$ & $(4,2)$ &$(4,3)$ &$(5,3)$ &$(6,4)$ &$(7,5)$ &$(8,6)$\\
		\hline
		$N(\text{RM}(m,r);S_2)$& $16$& $128$& $2048$& $6.711\times 10^7$& $1.441\times 10^{17}$& $1.329\times 10^{36}$\\
		\hline
%		\hline
%		$(4,2)$ & $16$\\ 
%		\hline
%		$(4,3)$ & $128$\\ 
%		\hline
%		 $(5,3)$ & $2048$\\
%		 \hline
%		$(6,4)$ & $6.711\times 10^7$\\
%		\hline
%		$(7,5)$ & $1.441\times 10^{17}$\\
%		\hline
%		$(8,6)$ & $1.329\times 10^{36}$\\
%		\hline
	\end{tabular}
\caption{Table of values of $N(\text{RM}(m,r);S_2)$, for select parameters $m$ and $r$}
\label{tab:rm}
\end{table}

\subsubsection{LP-Based Upper Bounds on the Sizes of $2$-Charge Constrained Codes}
Now, we shall work towards obtaining bounds on the sizes of constrained codes that are subsets of $S_2$, of minimum distance at least $d$. In other words, we are interested in formulating the symmetrized LP $\mathsf{Del}_{/G_{S_2}}(n,d;S_2)$. In what follows, we fix the blocklength $n$ to be odd. Slight modifications of the construction of the symmetry group $G_{S_2}$ and the identification of the orbits, below, yield  $\mathsf{Del}_{/G_{S_2}}(n,d;S_2)$, when $n$ is even.

Now, consider the following permutations, where $n$ is odd:
\begin{enumerate}
	\item For even indices $i\in [n]$, define $\pi_i^{\text{adj}}: [n]\to [n]$, such that $\pi_i^{\text{adj}}(i) = i+1$, $\pi_i^{\text{adj}}(i+1) = i$, and $\pi_i^{\text{adj}}(j) = j$, for $j\notin \{i,i+1\}$.
	
	In words, $\pi_{i,j}^{\text{adj}}$ swaps adjacent positions $i$ and $i+1$, for even $i\in [n]$, and leaves other positions unchanged. 
	\item For even indices $i,j\in [n]$, define ${\pi_{i,j}^{\text{swap}}}: [n]\to [n]$, such that ${\pi_{i,j}^{\text{swap}}}(i) = j$, ${\pi_{i,j}^{\text{swap}}}(i+1) = j+1$, and ${\pi_{i,j}}^{\text{swap}}(j) = i$, ${\pi_{i,j}^{\text{swap}}}(j+1) = i+1$, with ${\pi_{i,j}^{\text{swap}}}(k) = k$, for $k\notin \{i,i+1,j,j+1\}$.
	
	In words, $\pi_{i,j}^{\text{swap}}$ swaps $i$ and $j$, and $i+1$ and $j+1$, for $i,j$ being even, and leaves other positions unchanged. 
\end{enumerate}
 
The discussion above on the sequences in $S_2$ implies that the symmetry group $G_{S_2}$ of the constraint is generated (via compositions) by $\{\pi_i^{\text{adj}}:\ i\text{ even}\}\cup \{{\pi_{i,j}^{\text{swap}}}:\ i,j\text{ even}\}$. Further, consider tuples $\boldsymbol{\alpha}\in \{0,1\}\times \left[0:\left \lfloor \frac{n}{2}\right \rfloor\right] \times \left[0:\left \lfloor \frac{n}{2}\right \rfloor\right]$ of the form $\boldsymbol{\alpha} = (b,t_{00},t_{11})$, with $t_{00}+t_{11}\leq \left \lfloor \frac{n}{2}\right \rfloor$. For a sequence $\mathbf{x}\in \{0,1\}^n$, we identify $b\in \{0,1\}$ with $x_1$, the integer $t_{00}$ with $|\{i:\ i \text{ even and }(x_i,x_{i+1})=(0,0)\}|$, and the integer $t_{11}$ with $|\{i:\ i \text{ even and }(x_i,x_{i+1})=(1,1)\}|$. Note that then $|\{i:\ i \text{ even and }(x_i,x_{i+1})=(0,1)\text{ or }(1,0)\}| = \left \lfloor \frac{n}{2}\right \rfloor - t_{00} - t_{11}$. We thus have that the orbits of the symmetry group of the constraint $\{0,1\}^n/G_{S_2}$ are in one-one correspondence with tuples of the form $\boldsymbol{\alpha} = (b,t_{00},t_{11})$. Observe that the number of orbits is hence bounded above by $2\cdot \left \lceil \frac{n}{2}\right \rceil^2$, and therefore the number of variables and the number of constraints in the LP $\mathsf{Del}_{/G_{S_2}}(n,d;S_2)$, are bounded above by a polynomial function of the blocklength $n$, unlike the number of variables in $\mathsf{Del}(n,d;S_2)$, which equals $2^n$.

Table \ref{tab:2c} shows numerical evaluations of $\mathsf{Del}_{/G_{S_2}}(n,d;S_2)$, when $n=13$, for varying values of $d$. The table also includes comparisons with upper bounds via the generalized sphere packing bound of \cite{cullina} and \cite{fazeli} and with $\mathsf{Del}(n,d)$. We observe that our LP provides tighter upper bounds than those obtained by the sphere packing approach\footnote{{In this table and in all others where the generalized sphere packing bounds are computed, since we only show the bounds for the purpose of comparisons, we have used the LP given in Section II of \cite{fazeli} and not the symmetrized version in Corollary 8 of \cite{fazeli}.}}.

\begin{table}[t!]
	\centering
	\begin{tabular}{||c || c || c || c ||} 
		\hline
		$d$ & $\mathsf{Del}_{/S_2}(n,d;S_2)$ & $\mathsf{GenSph}(n,d;S_2)$ & $\mathsf{Del}(n,d)$\\ [0.5ex] 
		\hline\hline
		$2$ & $64$ & $64$ & $4096$\\ 
		\hline
		$3$ & $45.255$ & $64$ & $512$\\ 
		\hline
		$4$ & $45.255$ & $64$ & $292.571$\\ 
		\hline
		$5$ & $22.627$ & $64$ & $64$\\ 
		\hline
		$6$ & $17.889$ & $64$ & $40$\\ 
		\hline
		$7$ & $5.657$ & $32$ & $8$\\ 
		\hline
		$8$ & $4.619$ & $32$ & $5.333$\\ 
		\hline
		$9$ & $2.828$ & $16$ & $3.333$\\ 
		\hline
		$10$ & $2.619$ & $16$ & $2.857$\\ 
		\hline
	\end{tabular}
	\caption{Table of values of optimal values of the symmetrized $\mathsf{Del}_{/G_{S_2}}(n,d;S_2)$ LP, the generalized sphere packing bound LP $\mathsf{GenSph}(n,d;S_2)$ in \cite{fazeli} and \cite{cullina}, and the $\mathsf{Del}(n,d)$ LP, for $n=13$ and varying values of $d$.}
	\label{tab:2c}
\end{table}
\subsection{Constant Subblock Composition Constraint}
\label{sec:subblock}
We now move on to studying the constant subblock-composition CSC$_z^p$ constraint, which requires that each one of the $p$ ``subblocks'' of a binary sequence have a constant number, $z$, of $1$s. In particular, for any sequence $\mathbf{x}\in \{0,1\}^n$, we first partition the $n$ coordinates into $p$ subblocks, with the $\ell^\text{th}$ subblock being the vector of symbols $\mathbf{x}_\ell:=\left(x_i\in \{0,1\}:\ \frac{(\ell-1)n}{p}+1\leq i\leq \frac{\ell n}{p}\right)$, for $1\leq \ell\leq p$. We implicitly assume that $p$ divides $n$. Note that hence $\mathbf{x} = \mathbf{x}_1 \mathbf{x}_2\ldots \mathbf{x}_p$. A binary sequence $\mathbf{x}$ respects the CSC$_z^p$ constraint if $w(\mathbf{x}_\ell) = z$, for all $1\leq \ell\leq p$. We let $C_z^{p,(n)}$ (or simply, $C_z^p$) denote the set of all CSC$_z^p$-constrained sequences of length $n$. CSC$_{z}^p$-constrained sequences were introduced in \cite{subblock1} for simultaneous information and energy transfer from a powered transmitter to an energy harversting receiver, while ensuring that the receiver battery does not drain out during periods of low signal energy. The applications of such constrained codes to visible light \cite{subblock2} and powerline communications \cite{subblock3} have also been investigated. {We mention also that closed-form formulae and efficient algorithms for the generalized sphere packing bounds for this constraint were derived in \cite{gspsubblock}}.

%\subsubsection{Computation of Fourier Transform}
As before, we are interested in computing the Fourier coefficients of the function $\mathds{1}_{C_{z}^p}: \{0,1\}^n \to \{0,1\}$. The lemma below provides these Fourier coefficients. 
%The proof uses steps similar to those followed in the proof of Theorem \ref{thm:wtcount} and is omitted.
\begin{lemma}
	\label{lem:fcoeffsubblock}
	For $\mathbf{s}\in \{0,1\}^n$ with $\mathbf{s} = \mathbf{s}_1 \mathbf{s}_2\ldots \mathbf{s}_p$, we have that
	\[
	2^n\cdot\widehat{\mathds{1}_{C_{z}^p}}(\mathbf{s}) = \prod_{\ell = 1}^p K_z^{(n/p)}(w(\mathbf{s}_\ell)),
	\]
	where $K_i^{(n/p)}(j) = \sum_{t=0}^{i} (-1)^t {j\choose t} {n/p-j\choose i-t}$ is the $i^\text{th}$-Krawtchouk polynomial, for the length $n/p$.
\end{lemma}
The proof of the lemma above can be found in Section {III} of the supplement. In Section {IV} of the supplement, we provide detailed analysis of how Lemma IV.3 and Theorem III.1 can be used to reduce the computational complexity of counting the sizes of subblock-constrained subcodes of Reed-Muller codes, for {selected} values of $p$.
%\begin{proof}
%	We have that
%	\begin{align*}
%		2^n\cdot\widehat{\mathds{1}_{C_{z}^p}}(\mathbf{s}) &= \sum_{\mathbf{x}\in \{0,1\}^n:\ \mathbf{x}\in C_z^p} (-1)^{\mathbf{x}\cdot \mathbf{s}}\\
%		&= \sum_{\mathbf{x}_1\in \{0,1\}^{n/p}:\ w(\mathbf{x}) = z}\ldots \sum_{\mathbf{x}_p\in \{0,1\}^{n/p}:\ w(\mathbf{x}) = z}(-1)^{\mathbf{x}\cdot \mathbf{s}_{1}}\ldots(-1)^{\mathbf{x}\cdot \mathbf{s}_{p}}\\
%		&= \prod_{\ell=1}^p \left(\sum_{\mathbf{x}_\ell\in \{0,1\}^{n/p}:\ w(\mathbf{x}) = z} (-1)^{\mathbf{x}\cdot \mathbf{s}_{\ell}}\right).
%	\end{align*}
%	Now, by following a line of argument similar to that in the proof of Theorem \ref{thm:wtcount} in Appendix \ref{sec:appa}, we obtain that for any $\ell\in [p]$, the value of the inner summand depends on $\mathbf{s}_\ell$ only via its weight. In other words, it holds that for any $\ell\in [p]$,
%	\[
%	\sum_{\mathbf{x}_\ell\in \{0,1\}^{n/p}:\ w(\mathbf{x}) =z} (-1)^{\mathbf{x}\cdot \mathbf{s}_{\ell}} = \sum_{\mathbf{x}_\ell\in \{0,1\}^{n/p}:\ w(\mathbf{x}) = z} (-1)^{\mathbf{x}\cdot \tilde{\mathbf{s}}_{\ell}},
%	\]
%	where 
%$
%	\tilde{\mathbf{s}}_{\ell} = (\underbrace{1,1,\ldots,1}_{w(\mathbf{s}_\ell) \text{ such }},0,0,\ldots,0).
%$
%	By direct calculations, it holds that the sum in right-hand side of the expression above equals $K_{z}^{(n/p)}(w(\mathbf{s}_\ell))$.
%\end{proof}

%Since all words that belong to $C_z^p$ have Hamming weight exactly $pz$, we again do not apply Theorem \ref{thm:wtcount} to this constraint.

%\subsubsection{LP-Based Upper Bounds on the Sizes of Subblock-Constrained Codes}
Next, we provide a more explicit form of $\mathsf{Del}_{/G_\mathcal{A}}(n,d;\mathcal{A})$, when $\mathcal{A} = C_z^p$, for a fixed blocklength $n$ and parameters $p$ and $z$. {The structure of the symmetry group $G_{C_z^p}$ was derived in \cite{gspsubblock} (see the group $H$ in Section III of \cite{gspsubblock}), but to make the exposition self-contained}, we recall that the symmetry group $G_{C_z^p}$ is generated (via compositions) by the following permutations:
\begin{enumerate}
	\item For $1\leq \ell\leq p$, and $\frac{(\ell-1)n}{p}+1\leq j\leq \frac{\ell n}{p}$, define $\pi_\ell^{\text{perm},j}: [n]\to [n]$ such that $\pi_\ell^{\text{perm},j}$ swaps the indices $\frac{(\ell-1)n}{p}+1$ and $j$, and leaves the other indices in $[n]$ unchanged. 
	\item For $1\leq \ell, \ell^\prime \leq p$, define $\pi_{\ell,\ell^\prime}^{\text{exch}}: [n]\to [n]$ such that $\pi_{\ell,\ell^\prime}^{\text{exch}}$ swaps the element $\frac{(\ell-1)n}{p}+j$ with $\frac{(\ell^\prime-1)n}{p}+j$, for all $1\leq j\leq \frac{n}{p}$, and leaves the other indices in $[n]$ unchanged. In other words, $\pi_{\ell,\ell^\prime}^{\text{exch}}$ exchanges entire blocks indexed by $\ell$ and $\ell^\prime$. 
 \end{enumerate}
 
 {Note that for a fixed block indexed by $1\leq \ell\leq p$, the collection of permutations $\bigl\{\pi_\ell^{\text{perm},j}:$ $\frac{(\ell-1)n}{p}+1\leq j\leq \frac{\ell n}{p} \bigr\}$ generates a group isomorphic to the symmetric group $S_{n/p}$, which contains all permutations of the indices $ \frac{(\ell-1)n}{p}+1\leq i\leq \frac{\ell n}{p}$. Also, the collection of permutations $\{\pi_{\ell,\ell^\prime}^{\text{exch}}: 1 \le \ell,\ell' \le p\}$ generates a group isomorphic to the symmetric group $S_p$.}
	
From the description of the symmetry group $G_{C_z^p}$ above, we arrive at the fact that the orbits of the symmetry group are in one-one correspondence with \emph{unordered} $p$-tuples $\boldsymbol{\alpha}\in \left[0:\frac{n}{p}\right]^p$. Indeed, a given sequence $\mathbf{x}\in \{0,1\}^n$ lies in the orbit $\boldsymbol{\alpha}(\mathbf{x}) = (\alpha_1(\mathbf{x}),\ldots,\alpha_p(\mathbf{x}))$, where wt$(\mathbf{x}_\ell) = \alpha_{\sigma(\ell)}$, for $1\leq \ell\leq p$ and some permutation $\sigma\in S_{p}$. Note hence that the number of orbits, and therefore the sum of the number of variables and the number of constraints in $\mathsf{Del}_{/G_{C_z^p}}(n,d;C_z^p)$ is bounded above by $c\cdot\left(\frac{n}{p}\right)^p$, for some constant $c>0$, which is only a polynomial function of the blocklength. Further, for a given orbit $\boldsymbol{\alpha}$, we let $\mathbf{x}_{\boldsymbol{\alpha}}$ be a representative element of the orbit. In particular, we define $\mathbf{x}_{\boldsymbol{\alpha}}$ to be the concatenation $ \mathbf{x}_{\boldsymbol{\alpha},1}\mathbf{x}_{\boldsymbol{\alpha},2}\ldots\mathbf{x}_{\boldsymbol{\alpha},p}$, with 
\begin{equation}
	\label{eq:orbitrepsubblock}
\mathbf{x}_{\boldsymbol{\alpha},\ell} = (\underbrace{1,1,\ldots,1}_{\alpha_1 \text{ such }},0,0,\ldots,0)
\end{equation}
being of length $n/p$, for $1\leq\ell\leq p$. We thus obtain the following lemma:

\begin{lemma}
%	\label{lem:fcoeffsubblock}
	For given orbits $\boldsymbol{\alpha}, \tilde{\boldsymbol{\alpha}}$, with $\mathbf{s}_{\tilde{\boldsymbol{\alpha}}}$ being an orbit representative of $\tilde{\boldsymbol{\alpha}}$, we have
	\[
	2^n\cdot \widehat{\mathds{1}_{\boldsymbol{\alpha}}}(\mathbf{s}_{\tilde{\boldsymbol{\alpha}}}) = \prod_{\ell=1}^p K_{\alpha_\ell}^{(n/p)}(\tilde{\alpha}_\ell),
	\]
where for a given length $m$, $K_i^{(m)}$ is the $i^{\text{th}}$ Krawtchouk polynomial, with $K_i^{(m)}(j)= \sum_{t=0}^{i} (-1)^t {j\choose t} {m-j\choose i-t}$.
\end{lemma}
%\begin{proof}
%	We have that
%	\begin{align*}
%		2^n\cdot\widehat{\mathds{1}_{\boldsymbol{\alpha}}}(\mathbf{s}_{\tilde{\boldsymbol{\alpha}}}) &= \sum_{\mathbf{x}\in \{0,1\}^n:\ \boldsymbol{\alpha}(\mathbf{x}) = \boldsymbol{\alpha}} (-1)^{\mathbf{x}\cdot \mathbf{s}_{\tilde{\boldsymbol{\alpha}}}}\\
%			&= \sum_{\mathbf{x}_1\in \{0,1\}^{n/p}:\ w(\mathbf{x}) = \alpha_1}\ldots \sum_{\mathbf{x}_p\in \{0,1\}^{n/p}:\ w(\mathbf{x}) = \alpha_p}(-1)^{\mathbf{x}\cdot \mathbf{s}_{\tilde{\boldsymbol{\alpha}},1}}\ldots(-1)^{\mathbf{x}\cdot \mathbf{s}_{\tilde{\boldsymbol{\alpha}},p}}\\
%			&= \prod_{\ell=1}^p \left(\sum_{\mathbf{x}_\ell\in \{0,1\}^{n/p}:\ w(\mathbf{x}) = \alpha_\ell} (-1)^{\mathbf{x}\cdot \mathbf{s}_{\tilde{\boldsymbol{\alpha}},\ell}}\right).
%	\end{align*}
%By direct calculations, it holds that the sum in the expression above equals $K_{\alpha_\ell}^{(n/p)}(\tilde{\alpha}_\ell)$, when $\mathbf{s}_{\tilde{\boldsymbol{\alpha}}}$ follows the convention in \eqref{eq:orbitrepsubblock}.
%\end{proof}
The proof of the above lemma is similar to the proof of Lemma \ref{lem:fcoeffsubblock} (see Section II of the supplement), and is hence omitted.

%Again, using the notation $S(\boldsymbol{\alpha})$ to denote the number of elements in an orbit $\boldsymbol{\alpha}$, the symmetrized LP $\mathsf{Del}_{/G_{C_z^p}}(n,d;C_z^p)$ then becomes:
%\begin{center}
%	%	\begin{mybox}[colback=white, width = 0.75\linewidth]{$1,2$}
%	%		This is my box.
%	%	\end{mybox}
%	\begin{tcolorbox}[title={\center $\mathsf{Del}_{/G_{C_z^p}}(n,d;C_z^p)$}, title filled = false, width = 0.65\linewidth]
%		\begin{align}
%			&\underset{\{a_{\boldsymbol{\alpha}}\in \mathbb{R}:\ {\boldsymbol{\alpha}} \text{ is an orbit}\}}{\text{maximize}} \ \ \sum_{\boldsymbol{\alpha}} S(\boldsymbol{\alpha})\cdot a_{\boldsymbol{\alpha}} \tag{Obj$^{\prime\prime}$}\\
%			&\text{subject to:} \notag\\
%			&\ \ a_{\boldsymbol{\alpha}}\geq 0, \forall\ \text{orbits }\boldsymbol{\alpha}, \tag{D1$^\prime$}\\
%			&\ \ \sum_{\boldsymbol{\alpha}} a_{\boldsymbol{\alpha}}\cdot \widehat{\mathds{1}_{\boldsymbol{\alpha}}}(\mathbf{s}_{\tilde{\boldsymbol{\alpha}}})\geq 0,\ \forall\ \text{orbit rep. } \mathbf{s}_{\tilde{\boldsymbol{\alpha}}}\in \{0,1\}^n, \tag{D2$^\prime$}\\
%			&\ \ a_{\boldsymbol{\alpha}} = 0,\ \text{if $1\leq \sum_{t=1}^p \alpha_t\leq d-1$}, \tag{D3$^\prime$}\\
%			&\ \ a_{0^p}\leq \text{val}(\mathsf{Del}(n,d)), \tag{D4$^\prime$}\\
%			&\ \ a_{\boldsymbol{\alpha}}\leq 2^n\cdot (\mathds{1}_\mathcal{A}\star \mathds{1}_\mathcal{A})(\mathbf{x}_{\boldsymbol{\alpha}}),\ \forall\ \text{orbits } {\boldsymbol{\alpha}}. \tag{D5$^\prime$}
%		\end{align}
%	\end{tcolorbox}
%\end{center}
Tables \ref{tab:sub2} and \ref{tab:sub3} show numerical evaluations of $\mathsf{Del}_{/G_{C_z^p}}(n,d;C_z^p)$, when $n=15$, and $n=18$, respectively, for fixed parameters $p$ and $z$, and for varying values of $d$. In Table \ref{tab:sub2}, we again compare with upper bounds via the generalized sphere packing bound of \cite{cullina} and \cite{fazeli}. Here too our LP provides tighter upper bounds than the generalized sphere packing bounds.

%\begin{table}[t!]
%	\centering
%	\begin{tabular}{||c || c || c ||} 
%		\hline
%		$d$ & $\mathsf{Del}_{/C_5^2}(n,d;C_5^2)$ & $\mathsf{GenSph}(n,d;C_5^2)$\\ [0.5ex] 
%		\hline\hline
%		$2$ & $441$ & $441$\\ 
%		\hline
%		$3$ & $197.9899$ & $441$\\ 
%		\hline
%		$4$ & $197.9899$ & $441$\\ 
%		\hline
%		$5$ & $49.574$ & $147$\\ 
%		\hline
%		$6$ & $35.0542$ & $147$\\ 
%		\hline
%		$7$ & $11.3137$ & $73.5$\\ 
%		\hline
%	\end{tabular}
%	\caption{Table of values of optimal values of the $\mathsf{Del}_{/C_5^2}(n,d;C_5^2)$ LP, and the generalized sphere packing bound LP $\mathsf{GenSph}(n,d;C_5^2)$, for $(n,p,z) = (14,2,5)$, and varying values of $d$.}
%	\label{tab:sub1}
%\end{table}

\begin{table}[t!]
	\centering
	\begin{tabular}{||c || c || c ||} 
		\hline
		$d$ & $\mathsf{Del}_{C_2^3}(n,d;C_2^3)$ & $\mathsf{GenSph}(n,d;C_2^3)$\\ [0.5ex] 
		\hline\hline
		$2$ & $1000$ & $1000$\\ 
		\hline
		$3$ & $826.236$ & $1000$\\ 
		\hline
		$4$ & $826.236$ & $1000$\\ 
		\hline
		$5$ & $157.767$ & $333.333$\\ 
		\hline
		$6$ & $110.851$ & $333.333$\\ 
		\hline
		$7$ & $22.627$ & $166.667$\\ 
		\hline
	\end{tabular}
	\caption{Table of values of optimal values of the $\mathsf{Del}_{/C_2^3}(n,d;C_2^3)$ LP, and the generalized sphere packing bound LP $\mathsf{GenSph}(n,d;C_2^3)$, for $(n,p,z) = (15,3,2)$, and varying values of $d$.}
	\label{tab:sub2}
\end{table}

\begin{table}[t!]
	\centering
	\begin{tabular}{||c || c || c || c ||c ||c ||c ||c ||} 
		\hline
		$d$ & $3$ & $4$ & $5$ & $6$ & $7$ & $8$ & $9$ \\ [0.5ex] 
		\hline\hline
		$\mathsf{Del}_{/C_2^2}(n,d;C_2^2)$ & $556.38$ & $556.38$ & $227.111$ & $165.247$ & $38.118$ & $28.540$ & $4.472$\\ 
		\hline
	\end{tabular}
	\caption{Table of values of optimal values of the $\mathsf{Del}_{/C_2^2}(n,d;C_2^2)$ LP, for $(n,p,z) = (18,2,2)$, and varying values of $d$.}
	\label{tab:sub3}
\end{table}
\subsection{Runlength-Limited (RLL) Constraints}
\label{sec:dinf}
In this subsection, we shall work with runlength-limited constraints on binary sequences. Unlike in the previous subsections, where the Fourier coefficients of the indicator functions of the constraints were explicitly (or analytically) computable, in the application of Theorem \ref{thm:lincount}, for the constraints considered in this section, we shall provide recurrence relations for the Fourier coefficients, which allow them to be efficiently computable, numerically.

We concern ourselves with the  $(d,\infty)$-runlength limited (RLL) constraint. This constraint mandates that there be at least $d$ $0$s between every pair of successive $1$s in the binary input sequence, where $d\geq 1$. 
%For example, when $d=3$, the sequence $10001000010$ respects the $(3,\infty)$-RLL constraint, but the sequence $10100010$, does not. It can also be checked that the $(1,\infty)$-RLL constraint is the same as a ``no-consecutive-ones'' constraint. 
The $(d,\infty)$-RLL constraint is a special case of the $(d,k)$-RLL constraint, which admits only binary sequences in which successive $1$s are separated by at least $d$ $0$s, and the length of any run of $0$s is at most $k$. Such constraints help alleviate inter-symbol interference (ISI) between voltage responses corresponding to the magnetic transitions, in magnetic recording systems (see \cite{Immink2}). 
%Reference \cite{Immink2} contains many examples of $(d,k)$-RLL codes used in practice in magnetic storage and recording. More recently, $(d,k)$-RLL input constrained sequences have also been investigated for joint energy and information transfer performance \cite{infoenergy}. 
We let $S^d$ denote the set of $(d,\infty)$-RLL constrained binary words of length $n$.

Now, for $n\geq 1$, and for $\mathbf{s}\in \{0,1\}^n$, let $\widehat{\mathds{1}_{S^{d}}}^{(n)}(\mathbf{s})$ denote the Fourier coefficient at $\mathbf{s}$, when the blocklength is $n$. We then have that:
\begin{lemma}
\label{lem:fcoeffdinf}
For $n\geq d+2$ and for $\mathbf{s} = (s_1,\ldots,s_n)\in \{0,1\}^n$, we have
\[
\widehat{\mathds{1}_{S^{d}}}^{(n)}(\mathbf{s}) = 2^{-1}\cdot \widehat{\mathds{1}_{S^{d}}}^{(n-1)}\left(s_2^n\right)+(-1)^{s_1}\cdot 2^{-(d+1)}\cdot \widehat{\mathds{1}_{S^{d}}}^{(n-d-1)}\left(s_{d+2}^n\right).
\]
%and when $s_1 = 1$,
%\[
%\widehat{\mathds{1}_{S^{d}}}^{(n)}(\mathbf{s}) = 2^{-1}\cdot \widehat{\mathds{1}_{S^{d}}}^{(n-1)}\left(s_2^n\right)-2^{-(d+1)}\cdot\widehat{\mathds{1}_{S^{d}}}^{(n-d-1)}\left(s_{d+2}^n\right).
%\]
\end{lemma}
\begin{proof}
	We first write
	\begin{align}
		\widehat{\mathds{1}_{S^{d}}}^{(n)}(\mathbf{s}) &= \frac{1}{2^n}\cdot \sum_{\mathbf{x}\in S^d} (-1)^{\mathbf{x}\cdot \mathbf{s}} \notag\\
		&= 2^{-n}\cdot\left(\#\{x^n\in S^d:\ w_\mathbf{s}(x^n)\text{ is even}\} - \#\{x^n\in S^d:\ w_\mathbf{s}(x^n)\text{ is odd}\}\right). \label{eq:inter8}
	\end{align}
	We now prove the recurrence relation when $s_1=0$. Observe that in this case,
%	\begin{align}
%		\#\{x^n\in S^d:\ w_\mathbf{s}(x^n)\text{ is even}\} &= \#\{x^n\in S^d:\ w_\mathbf{s}(x^n)\text{ is even and $x_1 = 0$}\}\ +\notag\\&\ \ \ \ \ \ \ \ \ \ \ \ \ \ \ \ \ \ \ \ \ \ \ \ \ \ \ \ \ \  \#\{x^n\in S^d:\ w_\mathbf{s}(x^n)\text{ is even and $x_1 = 1$}\} \notag\\
%		&\stackrel{(a)}{=} \#\{x_2^n\in S^d:\ w_{s_2^n}(x_2^n)\text{ is even}\}\ + \notag\\&\ \ \ \ \ \ \ \ \ \ \ \ \ \ \ \ \ \ \ \ \ \ \ \ \ 
%		\#\{x^n\in S^d:\ w_\mathbf{s}(x^n)\text{ is even and $x_1^{(d+1)} = 10^d$}\} \notag\\
%		&=  \#\{x_2^n\in S^d:\ w_{s_2^n}(x_2^n)\text{ is even}\} + \#\{x_{d+2}^n\in S^d:\ w_{s_{d+2}^n}(x_{d+2}^n)\text{ is even}\}, \label{eq:inter9}
%	\end{align}
	\begin{align}
		&\#\{x^n\in S^d:\ w_\mathbf{s}(x^n)\text{ is even}\} \notag\\ &= \#\{x^n\in S^d:\ w_\mathbf{s}(x^n)\text{ is even and $x_1 = 0$}\}\ +  \#\{x^n\in S^d:\ w_\mathbf{s}(x^n)\text{ is even and $x_1 = 1$}\} \notag\\
		&\stackrel{(a)}{=} \#\{x_2^n\in S^d:\ w_{s_2^n}(x_2^n)\text{ is even}\}\ + 
		\#\{x^n\in S^d:\ w_\mathbf{s}(x^n)\text{ is even and $x_1^{(d+1)} = 10^d$}\} \notag\\
		&=  \#\{x_2^n\in S^d:\ w_{s_2^n}(x_2^n)\text{ is even}\} + \#\{x_{d+2}^n\in S^d:\ w_{s_{d+2}^n}({x_{d+2}^n})\text{ is even}\}, \label{eq:inter9}
	\end{align}
	where (a) holds because $s_1 = 0$ and from the fact that the $(d,\infty)$-RLL constraint requires that $x_2^{d+1} = 0^d$, if $x_1 = 1$. Similarly, we obtain that
	\begin{align}
		\#\{x^n\in S^d:\ w_\mathbf{s}(x^n)\text{ is odd}\}
		&=  \#\{x_2^n\in S^d:\ w_{s_2^n}(x_2^n)\text{ is odd}\} + \#\{x_{d+2}^n\in S^d:\ w_{s_{d+2}^n}({x_{d+2}^n})\text{ is odd}\}. \label{eq:inter10}
	\end{align}
	Now, observe that
	\begin{equation}
		\label{eq:inter14}
		\widehat{\mathds{1}_{S^{d}}}^{(n-1)}(s_2^n) = 2^{-(n-1)}\cdot\left(\#\{x_2^n\in S^d:\ w_{s_2^n}(x_2^n)\text{ is even}\} - \#\{x_2^n\in S^d:\ w_{s_2^n}(x_2^n)\text{ is odd}\}\right)
	\end{equation}
	and that
	\begin{align}
		\widehat{\mathds{1}_{S^{d}}}^{(n-d-1)}(s_{d+2}^n) = 2^{-(n-d-1)}\cdot\Big(\#\{x_{d+2}^n\in S^d:\ w_{s_{d+2}^n}(x_{d+2}^n)&\text{ is even}\} - \notag\\& \#\{x_{d+2}^n\in S^d:\ w_{s_{d+2}^n}(x_{d+2}^n)\text{ is odd}\}\Big).	\label{eq:inter15}
	\end{align}
	Substituting \eqref{eq:inter9} and \eqref{eq:inter10} in \eqref{eq:inter8} and using \eqref{eq:inter14} and \eqref{eq:inter15}, we obtain the recurrence relation when $s_1=0$. The case when $s_1=1$ is proved by similar arguments.
\end{proof}
We shall now explain how Lemma \ref{lem:fcoeffdinf} helps compute the Fourier coefficients for a given (large) $n$, efficiently. First, we note that a direct computation of all the Fourier coefficients of $\mathds{1}_{S^d}$ at blocklength $n$, can be accomplished by the fast Walsh-Hadamard transform (FWHT) algorithm (see Exercise 1.12(b) in \cite{ryanodonnell}), in time $n\cdot 2^n$. Now, let us assume that we pre-compute and store the Fourier coefficients $\left(\widehat{\mathds{1}_{S^{d}}}^{(m)}(\mathbf{s}):\ \mathbf{s}\in \{0,1\}^m\right)$, for $1\leq m\leq d+1$. These Fourier coefficients help initialize the recurrences in Lemma \ref{lem:fcoeffdinf}. Now, given a fixed (large) $n$, the Fourier coefficients at which blocklength we intend computing, we shall calculate, using the recurrence relations above, the Fourier coefficients at all blocklengths $d+2\leq m\leq n$, iteratively, beginning at length $d+2$, and increasing $m$. Assuming that the additions and multiplications in Lemma \ref{lem:fcoeffdinf} take unit time, it can be seen that the time complexity of computing the Fourier coefficient at length $n$ grows as $\sum_{d+2}^n 2^i < 2^{n+1}$. This is much less than the time that is $2^{n+\log_2 n}$, taken by the FWHT algorithm.

However, there still remains the issue of storage cost: at a blocklength $m$, one needs to store all $2^m$ Fourier coefficients in order to facilitate computation of the Fourier coefficients at blocklengths $n>m$. Hence, assuming that the storage of a single Fourier coefficient takes up one unit of space, we see that we require at least $2^n$ units of memory in order to store the Fourier coefficients at blocklength $n$. %For $n\gtrapprox 20$, for example, this storage cost becomes prohibitively expensive.

We now use the Fourier coefficients that are numerically computed using Lemma \ref{lem:fcoeffdinf}, to calculate, in Table \ref{tab:rm1inf}, the number of $(1,\infty)$-RLL constrained codewords in select codes, by applying Theorem \ref{thm:lincount}. We denote the binary Hamming code of blocklength $2^t-1$ as $\text{Ham}_t$. %Further, we assume that the coordinates of the Hamming and Reed-Muller codes follow the orderings discussed in Section \ref{sec:prelim}.

%Next, using Theorem \ref{thm:wtcount}, we calculate the weight distribution of words that respect the $(d,\infty)$-RLL constraint. The weight distributions when $(n,d) = (15,1)$ and $(n,d)=(17,2)$ are presented as Tables \ref{tab:wt1inf} and \ref{tab:wt2inf}, respectively. Note that the maximum weight of any $(1,\infty)$-RLL constrained length-$n$ word is $\left \lceil \frac{n}{2}\right \rceil$, and that the maximum weight of any $(2,\infty)$-RLL constrained length-$n$ word is $\left \lceil \frac{n}{3}\right \rceil$.
%\begin{table}[t!]
%	\centering
%	\begin{tabular}{||c | c | c | c | c | c | c | c | c | c ||} 
%		\hline
%		Weight $i$ & $0$& $1$& $2$& $3$& $4$& $5$& $6$& $7$& $8$ \\ [0.5ex] 
%		\hline\hline
%		$a_{i,S_1}$ & $1$& $15$& $91$& $286$& $495$& $462$& $210$& $36$& $1$\\ 
%		\hline
%	\end{tabular}
%	\caption{Weight distribution $\left(a_{i,S^1}:\ 0\leq i\leq \left \lceil \frac{n}{2}\right\rceil \right)$, when $n=15$}
%	\label{tab:wt1inf}
%\end{table}
%\begin{table}[t!]
%	\centering
%	\begin{tabular}{||c | c | c | c | c | c | c | c ||} 
%		\hline
%		Weight $i$ & $0$& $1$& $2$& $3$& $4$& $5$& $6$ \\ [0.5ex] 
%		\hline\hline
%		$a_{i,S_2}$ & $1$& $17$& $105$& $286$& $330$& $126$& $7$ \\ 
%		\hline
%	\end{tabular}
%	\caption{Weight distribution $\left(a_{i,S^2}:\ 0\leq i\leq \left \lceil \frac{n}{3}\right \rceil\right)$, when $n=17$}
%	\label{tab:wt2inf}
%\end{table}

Next, we obtain upper bounds on the sizes of $(d,\infty)$-RLL constrained codes with a given minimum distance, by directly running the $\mathsf{Del}(n,d;S^d)$ LP. %In the full version \cite{arnk22jsaitarxiv}, we discuss a symmetrized LP for a slightly stronger version of the $(d,\infty)$-RLL constraint.
%Table \ref{tab1} (resp. Table \ref{tab2}) shows comparisons between the upper bounds on $A(n,d;S^1)$ (resp. $A(n,d;S^2)$), obtained using our  $\mathsf{Del}(n,d;S^1)$ LP (resp. $\mathsf{Del}(n,d;S^2)$ LP) with the generalized sphere packing bound of \cite{cullina} and \cite{fazeli}, when $n=10$, and for varying values of the minimum distance $d$. 
Table \ref{tab2} shows comparisons between the upper bounds on $A(n,d;S^2)$, obtained using our  $\mathsf{Del}(n,d;S^2)$ LP, with the generalized sphere packing bound of \cite{cullina} and \cite{fazeli}, when $n=10$, and for varying values of the minimum distance $d$. We also compare these upper bounds with the optimal value of $\mathsf{Del}(n,d)$, since this is a trivial upper bound on $A(n,d;\mathcal{A})$, for any $\mathcal{A}\subseteq \{0,1\}^n$. Note that, from the numerical trials, for certain values of $d$, the generalized sphere-packing bound returns a value that is larger (and hence worse) than the value of $\mathsf{Del}(n,d)$, whereas the optimal value of our $\mathsf{Del}(n,d;\mathcal{A})$ LP is uniformly bounded above by $\mathsf{Del}(n,d)$.

%\begin{table}[t!]
%	\centering
%	\begin{tabular}{||c || c || c || c ||} 
%		\hline
%		$d$ & $\mathsf{Del}(n,d;S^1)$& $\mathsf{GenSph}(n,d;S^1)$ & $\mathsf{Del}(n,d)$\\ [0.5ex] 
%		\hline\hline
%		$2$ & $128.557$ & $144$ & $512$\\ 
%		\hline
%		$3$ & $74.762$ & $111$ & $85.333$\\ 
%		\hline
%		$4$ & $42.048$ & $111$ & $42.667$\\ 
%		\hline
%		$5$ & $12$ & $63$ & $12$\\ 
%		\hline
%		$6$ & $6$ & $63$ & $6$\\ 
%		\hline
%		$7$ & $3.2$ & $26$ & $3.2$\\ 
%		\hline
%	\end{tabular}
%	\caption{Table of values of optimal values of the $\mathsf{Del}(n,d;S^1)$ LP, the generalized sphere packing bound LP $\mathsf{GenSph}(n,d;S^1)$ in \cite{cullina} and \cite{fazeli}, and the $\mathsf{Del}(n,d)$ LP, for $n=10$ and varying values of $d$.}
%	\label{tab1}
%\end{table}
\begin{table}[t!]
	\centering
	\begin{tabular}{||c || c|| c||c||c||} 
		\hline
		$\mathcal{C}$ & RM$(4,2)$ & RM$(4,3)$ &Ham$_3$ &Ham$_4$\\
		\hline
		$N(\mathcal{C};S_1)$& $83$& $1292$& $4$& $101$\\
		\hline
		%		\hline
		%		RM$(4,2)$ & $83$\\ 
		%		\hline
		%		RM$(4,3)$ & $1292$\\ 
		%		\hline
		%		Ham$_3$ & $4$\\
		%		\hline
		%		Ham$_4$ & $101$\\
		%		\hline
	\end{tabular}
	\caption{Table of values of $N(\mathcal{C};S^1)$, for select codes $\mathcal{C}$}
	\label{tab:rm1inf}
\end{table}

\begin{table}[t!]
	\centering
	\begin{tabular}{||c || c || c || c ||} 
		\hline
		$d$ & $\mathsf{Del}(n,d;S^2)$ & $\mathsf{GenSph}(n,d;S^2)$ & $\mathsf{Del}(n,d)$\\ [0.5ex] 
		\hline\hline
		$2$ & $49.578$ & $60$ & $512$\\ 
		\hline
		$3$ & $32.075$ & $46.5$ & $85.333$\\ 
		\hline
		$4$ & $21.721$ & $46.5$ & $42.667$\\ 
		\hline
		$5$ & $7.856$ & $34$ & $12$\\ 
		\hline
		$6$ & $4.899$ & $34$ & $6$\\ 
		\hline
		$7$ & $2.529$ & $19$ & $3.2$\\ 
		\hline
	\end{tabular}
	\caption{Table of values of optimal values of the $\mathsf{Del}(n,d;S^2)$ LP, the generalized sphere packing bound LP $\mathsf{GenSph}(n,d;S^2)$, and the $\mathsf{Del}(n,d)$ LP, for $n=10$ and varying values of $d$.}
	\label{tab2}
\end{table}
%\input{tail-biting.tex}

%\subsection{Odd Constraint}
%\input{odd.tex}
%\subsection{Constant Subblock-Composition Constraint}
%\input{subblock-fourier.tex}
%\section{Applications: Numerically Computable Fourier Coefficients}
%\label{sec:appl2}
%\subsection{$(d,\infty)$-Runlength Limited Constraint}
%\input{dinf.tex}
%\subsection{Even Constraint}
%\input{even.tex}
\section{Conclusion}
\label{sec:conclusion}
In this work, we took two approaches to the problem of estimating the sizes of binary error-correcting constrained codes. First, motivated by the application of transmission of codes over stochastic, symmetric, channel noise models---a problem for which explicit capacity-achieving linear codes have been constructed, we consider the question of computing the sizes of constrained subcodes of linear codes. Such constrained subcodes of capacity-achieving linear codes, for example, are resilient to symmetric errors and erasures, in that their error probabilities using the same decoding strategy as for the larger linear code, vanish as the blocklength of the code goes to infinity. Our approach was to view the problem through a Fourier-analytic lens, thereby transforming it into a counting problem in the space of the dual code. As part of our method, we analyzed (analytically or numerically) the Fourier transform of the indicator function of the constraint, and observed the somewhat surprising fact that for many constraints of interest, this Fourier transform is in fact efficiently computable. We then provided values of the number of constrained codewords in select linear codes and algorithmic procedures for efficient counting, in the cases of certain constraints. 
%For some constraints, we also obtained insights into the construction of linear codes with a large number of constrained codewords.

Next, we considered the scenario where the constrained codes were subjected to adversarial bit-flip errors or erasures, with a combinatorial bound on the number of errors or erasures that can be induced. We then proposed numerical upper bounds on the sizes of constrained codes with a given resilience to such combinatorial errors and erasures (equivalently, with a prescribed minimum Hamming distance), via an extension of Delsarte's linear program (LP). %For select constraints, we showed that it is in fact possible to reduce the number of variables and LP constraints, by a symmetrization procedure that makes use of a certain subgroup of coordinate permutations that leaves the indicator function of the constraint invariant. 
We observed that the optimal numerical values returned by our LP for different constaints are better than those provided by the generalized sphere packing bounds of Fazeli, Vardy, and Yaakobi (2015).

There are many interesting directions for future work. One line of study would be to build on the Fourier-theoretic techniques in this paper and study the asymptotics (in the limit as the blocklength goes to infinity) of the rates of constrained subcodes of specific linear codes of a given rate $R\in (0,1)$. In parallel, it will be of interest to derive efficient procedures for computing the Fourier transforms of indicator functions of large families of structured constraints. {Similarly, one could try to derive a dual LP formulation and use Fourier-analytic techniques} (see \cite{sam1,sam2}) to derive asymptotic upper bounds on the rate-distance tradeoff for constrained codes. This, for example, will help us understand if the Gilbert-Varshamov lower bounds of Marcus and Roth (1992) are tight for any constrained system. Another direction of work could study the extension of results here to codes with larger alphabet sizes.

\section*{Acknowledgements}
The authors thank Prof. H. D. Pfister and Prof. Manjunath Krishnapur for useful discussions.

%\section{Proof of Proposition \ref{prop:simplify}}
%\input{AppendixA}
% you can choose not to have a title for an appendix
% if you want by leaving the argument blank
%\section{Proof of equations \eqref{eq:append} and \eqref{eq:gendappend}}
%\input{AppendixB}

% use section* for acknowledgme

% Can use something like this to put references on a page
% by themselves when using endfloat and the captionsoff option.
\ifCLASSOPTIONcaptionsoff
  \newpage
\fi

% trigger a \newpage just before the given reference
% number - used to balance the columns on the last page
% adjust value as needed - may need to be readjusted if
% the document is modified later
%\IEEEtriggeratref{8}
% The "triggered" command can be changed if desired:
%\IEEEtriggercmd{\enlargethispage{-5in}}

% references section

% can use a bibliography generated by BibTeX as a .bbl file
% BibTeX documentation can be easily obtained at:
% http://mirror.ctan.org/biblio/bibtex/contrib/doc/
% The IEEEtran BibTeX style support page is at:
% http://www.michaelshell.org/tex/ieeetran/bibtex/
%\bibliographystyle{IEEEtran}
% argument is your BibTeX string definitions and bibliography database(s)
%\bibliography{IEEEabrv,../bib/paper}
%
% <OR> manually copy in the resultant .bbl file
% set second argument of \begin to the number of references
% (used to reserve space for the reference number labels box)
\bibliographystyle{IEEEtran}
{\footnotesize
	\bibliography{references}}

\end{document}

% --- supplement: supplement.tex ---

\title{Supplement to: Estimating the Sizes of Binary Error-Correcting Constrained Codes}
	%
	%
	% author names and IEEE memberships
	% note positions of commas and nonbreaking spaces ( ~ ) LaTeX will not break
	% a structure at a ~ so this keeps an author's name from being broken across
	% two lines.
	% use \thanks{} to gain access to the first footnote area
	% a separate \thanks must be used for each paragraph as LaTeX2e's \thanks
	% was not built to handle multiple paragraphs
	%
	
	\author{V.~Arvind~Rameshwar,~\IEEEmembership{Student Member,~IEEE,}
		and~Navin~Kashyap,~\IEEEmembership{Senior~Member,~IEEE}% <-this % stops a space
	%	\thanks{This work was supported in part by a Qualcomm Innovation Fellowship India 2022. The work of V.~A.~Rameshwar was supported by a Prime Minister's Research Fellowship, from the Ministry of Education, Govt. of India.}
		\thanks{The authors are with the Department of Electrical Communication Engineering, Indian Institute of Science, Bengaluru 560012, India (e-mail: vrameshwar@iisc.ac.in;~nkashyap@iisc.ac.in).}
	}
	
%	\markboth{IEEE Journal on Selected Areas in Information Theory, January 2023}%
	%\MakeLowercase{\textit{et al.}
%	{Rameshwar and Kashyap: }
	
	\maketitle
	
%	\section*{Supplementary Material}
	%\input{appa.tex}
	%\input{appb.tex}
	\section{Obtaining MacWilliams' Identities for Linear Codes Via Theorem III.1}
\label{sec:appc}
Consider the simple constraint that admits only sequences having a fixed weight $i\in [0:n]$, where $n$ is the blocklength of the code. Note that in this case, the set of constrained sequences is $\mathcal{A} = W_i$, {where $W_i$ is the set of length-$n$ sequences of weight $0\leq i\leq n$}. By applying Theorem III.1 to this constraint, for a given linear code $\mathcal{C}$, we obtain the well-known MacWilliams' identities \cite{macwilliams} for linear codes. We use the notation $a_i(\mathcal{C})$ for  the number of codewords of weight $i\in [0:n]$ in $\mathcal{C}$, which equals $N(\mathcal{C}; W_i)$, following the notation of Theorem III.1 of the paper.
\begin{theorem}[MacWilliams' identities]
	It is true that
	\[
	a_i(\mathcal{C}) = \frac{1}{\left\lvert \mathcal{C}^\perp \right\rvert}\sum_{j=0}^n K_i^{(n)}(j)\cdot a_j(\mathcal{C}^\perp).
	\]
\end{theorem}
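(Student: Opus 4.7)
The plan is to apply Theorem III.1 directly with the constraint set $\mathcal{A} = W_i$, since in that case $N(\mathcal{C}; \mathcal{A})$ is, by definition, $a_i(\mathcal{C})$. Theorem III.1, in the spirit of the Poisson summation/MacWilliams machinery, expresses $N(\mathcal{C}; \mathcal{A})$ as $\frac{1}{|\mathcal{C}^\perp|}\sum_{\mathbf{y} \in \mathcal{C}^\perp}\widehat{\mathds{1}_\mathcal{A}}(\mathbf{y})$, where $\widehat{\mathds{1}_\mathcal{A}}(\mathbf{y}) = \sum_{\mathbf{x} \in \mathcal{A}}(-1)^{\langle \mathbf{x},\mathbf{y}\rangle}$ is the Walsh--Hadamard transform of the indicator function of $\mathcal{A}$. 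Plugging in $\mathcal{A} = W_i$ reduces the whole task to computing $\widehat{\mathds{1}_{W_i}}(\mathbf{y})$ for $\mathbf{y}\in\mathcal{C}^\perp$.

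The key computation is then an elementary combinatorial one. I would fix $\mathbf{y}$ of Hamming weight $j$ and condition on $k := |\mathrm{supp}(\mathbf{x}) \cap \mathrm{supp}(\mathbf{y})|$. For each $k \in \{0,1,\ldots,i\}$, there are exactly $\binom{j}{k}\binom{n-j}{i-k}$ vectors $\mathbf{x}$ of weight $i$ whose support meets that of $\mathbf{y}$ in precisely $k$ positions, and each such $\mathbf{x}$ contributes $(-1)^k$ to the transform. This yields
\[
\widehat{\mathds{1}_{W_i}}(\mathbf{y}) \;=\; \sum_{k=0}^{i}(-1)^k\binom{j}{k}\binom{n-j}{i-k} \;=\; K_i^{(n)}(j),
\]
which is the defining expression for the Krawtchouk polynomial. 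The crucial feature is that this value depends on $\mathbf{y}$ only through its weight $j$.

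To finish, I would reorganize the sum over $\mathbf{y}\in\mathcal{C}^\perp$ by grouping dual codewords by weight, writing $\sum_{\mathbf{y}\in\mathcal{C}^\perp} = \sum_{j=0}^{n}\sum_{\mathbf{y}\in \mathcal{C}^\perp\cap W_j}$. Since the summand depends only on $j$, the inner sum contributes a factor of $|\mathcal{C}^\perp \cap W_j| = a_j(\mathcal{C}^\perp)$, and the claimed identity
\[
a_i(\mathcal{C}) \;=\; \frac{1}{|\mathcal{C}^\perp|}\sum_{j=0}^{n} K_i^{(n)}(j)\, a_j(\mathcal{C}^\perp)
\]
drops out immediately.

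The proof is essentially mechanical once Theorem III.1 is in hand; the only step requiring any genuine care is the Krawtchouk evaluation, and that is a standard sign-tracking count. If there is a \emph{main obstacle}, it is really just making sure the normalization and sign conventions used in Theorem III.1's Walsh--Hadamard transform match the conventional definition of $K_i^{(n)}(j)$, so that no stray factor appears in front of the sum. This is precisely the point of exhibiting MacWilliams' identities as a special case: no additional idea beyond Theorem III.1 is needed.
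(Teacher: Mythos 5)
Your proposal is correct and follows essentially the same route as the paper: apply Theorem III.1 with $\mathcal{A}=W_i$, use that $\widehat{\mathds{1}_{W_i}}$ at a point of weight $j$ is (up to the transform's normalization) $K_i^{(n)}(j)$, and group the dual codewords by weight. You in fact supply the explicit Krawtchouk count that the paper leaves implicit, and your caveat about matching normalizations is exactly the point the paper handles via $\lvert\mathcal{C}\rvert\cdot\lvert\mathcal{C}^\perp\rvert=2^n$.
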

\begin{proof}
	The proof simply uses the fact that $\widehat{\mathds{1}_{W_i}}(\mathbf{s}) = \frac{K_i^{(n)}(w(\mathbf{s}))}{2^n}$. By simplifying the summation in Theorem III.1 of the paper, and by using the fact that $\left\lvert \mathcal{C} \right\rvert\cdot \left\lvert \mathcal{C}^\perp \right\rvert = 2^n$, we obtain the required result.
\end{proof}
%Further, it holds that the number of constrained codewords of weight $j$ obeys $a_{i,W_j} = {n\choose j}$, if $j=i$, and $0$, otherwise.
	{\section{Proof of Corollary IV.1}}
\label{sec:apphamm}
\begin{proof}[Proof of Corollary IV.1]
	The dual code $\mathcal{C}^\perp$ of the $[2^m-1,2^m-1-m]$ Hamming code is the $[2^m-1,m]$ simplex code, all of whose non-zero codewords (i.e., codewords that are not equal to $0^{2^m-1}$) are of weight $2^{m-1}$. Further, a generator matrix of the simplex code under consideration is $H_{\text{Ham}}$, {where we recall that $H_{\text{Ham}}$ is a parity-check matrix of the Hamming code, defined as follows (see also the end of Section III.A of the paper)}. Let the columns of $H_{\text{Ham}}$ be indexed by $m$-tuples $(x_1,\ldots,x_m)\in \{0,1\}^m \setminus \{0^m\}$, ordered in the standard lexicographic order, i.e., the $i^{\text{th}}$ column of $G$ is indexed as $\mathbf{B}_m(i)$, for $1\leq i\leq 2^m-1$. It is well-known (see, for example, Section 1.10 of \cite{verapless}) that the $j^\text{th}$ row of $H_{\text{Ham}}$ is the evaluation vector, over the $m$-tuples indexing the columns, of the monomial $x_{j}$, for $1\leq j\leq m$. We write this row as Eval$^{\setminus \mathbf{0}}(x_j)$.
	
	Consider the first $m-1$ rows of $H_{\text{Ham}}$, which are the evaluation vectors Eval$^{\setminus \mathbf{0}}(x_j)$, for $1\leq j\leq m-1$. It can be checked that the Hamming weight, $2^{m-1}$, of any of these rows is a multiple of $4$, when $m\geq 3$. Moreover, in any of these rows, if the entry corresponding to the evaluation point $(x_1,\ldots,x_{m-1},0)$ equals $1$, then so does the entry corresponding to the evaluation point $(x_1,\ldots,x_{m-1},1)$. The above two facts imply that each of the first $m-1$ rows of $H_{\text{Ham}}$ can be written as a linear combination of an \emph{even} number of vectors $\mathbf{b}_\ell \in \mathcal{B}$, for $\ell\in [1: \left\lceil \frac{2^m-1}{2}\right \rceil - 1]$. Hence, from Lemma IV.1, it holds that the Fourier coefficient $\widehat{\mathds{1}_{S_{2}}}\left(\text{Eval}^{\setminus \mathbf{0}}(x_j)\right) = 2^{\left\lfloor \frac{2^m-1}{2}\right \rfloor - (2^m-1)}$, for all $1\leq j\leq m-1$. Furthermore, observe that the above arguments also hold for any linear combination of the first $m-1$ rows of $H_{\text{Ham}}$, i.e., it holds that $\widehat{\mathds{1}_{S_{2}}}\left(\text{Eval}^{\setminus \mathbf{0}}(\mathbf{s})\right) = 2^{\left\lfloor \frac{2^m-1}{2}\right \rfloor - (2^m-1)}$, where $\mathbf{s} = \sum_{j=2}^{m}c_j\cdot \text{Eval}^{\setminus \mathbf{0}}(x_j)$, for $c_j\in \{0,1\}$, $j\in [2:m]$.
	
	It can also be seen that since Eval$^{\setminus \mathbf{0}}(x_m)\notin V_{\mathcal{B}}$, we have that $\widehat{\mathds{1}_{S_{2}}}\left(\text{Eval}^{\setminus \mathbf{0}}(x_m)\right) = 0$, and similarly, that $\widehat{\mathds{1}_{S_{2}}}\left(\text{Eval}^{\setminus \mathbf{0}}(\mathbf{s})\right) = 0$, where $\mathbf{s} = \text{Eval}(x_1)+\sum_{j=1}^{m-1}c_j\cdot \text{Eval}(x_j)$, for $c_j\in \{0,1\}$, $j\in [m-1]$. Putting everything together, we observe that for half of the codewords $\mathbf{s}\in \mathcal{C}^\perp$, the Fourier coefficient $\widehat{\mathds{1}_{S_{2}}}(\mathbf{s})$ equals $2^{\left\lfloor \frac{2^m-1}{2}\right \rfloor - (2^m-1)}$, and for another half of the codewords, the Fourier coefficient $\widehat{\mathds{1}_{S_{2}}}(\mathbf{s})$ equals zero. Applying (8) in the paper, we get that
	\begin{align*}
		N(\mathcal{C};S_2) &= |\mathcal{C}|\cdot  \sum_{\mathbf{s}\in \mathcal{C}^\perp} \widehat{\mathds{1}_{S_{2}}}(\mathbf{s})\\
		&= 2^{2^m-1-m}\cdot 2^{m-1}\cdot 2^{\left\lfloor \frac{2^m-1}{2}\right \rfloor - (2^m-1)} = 2^{\left\lfloor \frac{2^m-1}{2}\right \rfloor-1},
	\end{align*}
	where the second inequality holds since $|\mathcal{C}| = 2^{2^m-1-m}$ and $\left\lvert \mathcal{C}^\perp \right \rvert = 2^m$, and half the codewords $\mathbf{s}\in \mathcal{C}^\perp$ have nonzero Fourier coefficient $\widehat{\mathds{1}_{S_{2}}}(\mathbf{s})$.
\end{proof}
	%\input{valdel.tex}
	%\input{appd.tex}
	\section{Proof of Lemma IV.3}
\begin{proof}
	We have that
	\begin{align*}
		2^n\cdot\widehat{\mathds{1}_{C_{z}^p}}(\mathbf{s}) &= \sum_{\mathbf{x}\in \{0,1\}^n:\ \mathbf{x}\in C_z^p} (-1)^{\mathbf{x}\cdot \mathbf{s}}\\
		&= \sum_{\mathbf{x}_1\in \{0,1\}^{n/p}:\ w({\mathbf{x}_1}) = z}\ldots \sum_{\mathbf{x}_p\in \{0,1\}^{n/p}:\ w({\mathbf{x}_p}) = z}(-1)^{{\mathbf{x}_1}\cdot \mathbf{s}_{1}}\ldots(-1)^{{\mathbf{x}_p}\cdot \mathbf{s}_{p}}\\
		&= \prod_{\ell=1}^p \left(\sum_{\mathbf{x}_\ell\in \{0,1\}^{n/p}:\ w({\mathbf{x}_\ell}) = z} (-1)^{{\mathbf{x}_\ell}\cdot \mathbf{s}_{\ell}}\right).
	\end{align*}
Now, for any $\ell \in [p]$, the summation above only on the weight $w(\mathbf{s}_\ell)$, i.e., for any permutation of coordinates $\pi: \{0,1\}^{n/p}\to \{0,1\}^{n/p}$, it holds that
\begin{align*}
	\sum_{\mathbf{x}\in \{0,1\}^{n/p}:\ w(\mathbf{x}) = z}	(-1)^{\mathbf{x}\cdot (\pi\cdot \mathbf{s}_\ell)} &= \sum_{\mathbf{x}\in \{0,1\}^n:\ w(\mathbf{x}) = z}  (-1)^{(\pi\cdot \mathbf{x})\cdot (\pi\cdot \mathbf{s}_\ell)}\\
	&= \sum_{\mathbf{x}\in \{0,1\}^n:\ w(\mathbf{x}) = z} (-1)^{\mathbf{x}\cdot \mathbf{s}_\ell}.
\end{align*}
Hence, for any $\ell\in [p]$ and for $\mathbf{s}_\ell$ such that $w(\mathbf{s}_\ell) = j$, it suffices that we calculate the summation above at $\mathbf{s}_\ell=\mathbf{s}^\star = (s_1^\star,\ldots,s_{n/p}^\star)$, with $s_1^\star = \ldots = s_{j}^\star = 1$ and $s_{j+1}^\star=\ldots=s_{n/p}^\star = 0$. By a direct computation, it can be checked that the summation above equals $K_{z}^{(n/p)}(w(\mathbf{s}_\ell))$.
%$2^n\cdot  \widehat{\mathds{1}_{W_i}}(\mathbf{s}^\star) = \sum_{\ell=0}^{i} (-1)^\ell {j\choose \ell} {n-j\choose i-\ell} = 2^n\cdot  \widehat{\mathds{1}_{W_i}}(\mathbf{s})$. Plugging this back into \eqref{eq:inter2} and simplifying, we obtain the expression in the theorem.
%	Now, by following a line of argument similar to that in the proof of Theorem \ref{thm:wtcount} in Section \ref{sec:appa} of this supplement, we obtain that for any $\ell\in [p]$, the value of the inner summand depends on $\mathbf{s}_\ell$ only via its weight. In other words, it holds that for any $\ell\in [p]$,
%	\[
%	\sum_{\mathbf{x}_\ell\in \{0,1\}^{n/p}:\ w(\mathbf{x}) =z} (-1)^{\mathbf{x}\cdot \mathbf{s}_{\ell}} = \sum_{\mathbf{x}_\ell\in \{0,1\}^{n/p}:\ w(\mathbf{x}) = z} (-1)^{\mathbf{x}\cdot \tilde{\mathbf{s}}_{\ell}},
%	\]
%	where 
%$
%	\tilde{\mathbf{s}}_{\ell} = (\underbrace{1,1,\ldots,1}_{w(\mathbf{s}_\ell) \text{ such }},0,0,\ldots,0).
%$
%	By direct calculations, it holds that the sum in right-hand side of the expression above equals $K_{z}^{(n/p)}(w(\mathbf{s}_\ell))$.
\end{proof}
		\section{Subblock-Constrained Subcodes of RM Codes}
\label{sec:appdn}
In what follows, we shall concern ourselves with the application of Lemma IV.3 and Theorem III.1 to calculating the number of subblock constrained codewords in Reed-Muller (RM) codes RM$(m,r)$, for {selected} values of the number of subblocks $p$.

First, we recall an important property of RM codes, which is sometimes called the Plotkin decomposition (see \cite[Chap. 13]{mws} or the survey \cite{rm_survey}): any length-$2^m$ codeword $\mathbf{c}\in \text{RM}(m,r)$ can be written as the concatenation $\mathbf{c} = (\mathbf{u}\mid \mathbf{u}+\mathbf{v})$, where $\mathbf{u}\in \text{RM}(m-1,r)$ and $\mathbf{v}\in \text{RM}(m-1,r-1)$ and the `$+$' operation in $\mathbf{u}+\mathbf{v}$ is over $\mathbb{F}_2^{2^{m-1}}$. Observe that since RM$(m,t)$, for $1\leq t\leq m$, consists of evaluation vectors of Boolean polynomials of degree at most $t$, it holds that RM$(m-1,r-1)\subset $ RM$(m-1,r)$. In what follows, we ensure that $r\geq 1$ and $m$ is large.

Assume, for simplicity, that $p=2$. We then have that for $0\leq z\leq 2^{m-1}$,
\begin{align}
	N\left(\text{RM}(m,r);C_z^2\right) &= \sum_{\mathbf{c}\in \text{RM}(m,r)} \mathds{1}_{C_z^2}(\mathbf{x}) \notag\\
	&= \sum_{\substack{\mathbf{u}\in \text{RM}(m-1,r),\\ \mathbf{v}\in \text{RM}(m-1,r-1)}} \mathds{1}_{W_z}(\mathbf{u})\cdot \mathds{1}_{W_z}(\mathbf{u}+\mathbf{v}), \label{eq:inter7}
\end{align}
where the second equality uses the Plotkin decomposition and the fact that the set $W_z$ consists of sequences of Hamming weight exactly $z$. Further, let $\mathbf{u}_1,\mathbf{u}_2,\ldots,\mathbf{u}_M$ be an enumeration of coset representatives of distinct cosets of RM$(m-1,r-1)$ in RM$(m-1,r)$, where $M = \frac{|\text{RM}(m-1,r)|}{|\text{RM}(m-1,r-1)|} = 2^{{m-1\choose \leq r} - {m-1\choose \leq r-1}} = 2^{{m-1\choose r}}$. In other words, $\mathbf{u}_i$ is a representative of the coset $\mathbf{u}_i + \text{RM}(m-1,r-1)$, with $\mathbf{u}_i\in \text{RM}(m-1,r)$, for $1\leq i\leq M$, where the cosets $\mathbf{u}_j + \text{RM}(m-1,r-1)$, for different values of $j$, are disjoint. Let $A_{\mathbf{u}}(y)$ be the weight enumerator of the coset $\mathbf{u} + \text{RM}(m-1,r-1)$, at the weight $0\leq y\leq 2^{m-1}$, for $\mathbf{u}\in \text{RM}(m-1,r)$. Then, from \eqref{eq:inter7}, we see that
\begin{align}
	N\left(\text{RM}(m,r);C_z^2\right) &= \sum_{\substack{\mathbf{u}\in \text{RM}(m-1,r),\\ \mathbf{v}\in \text{RM}(m-1,r-1)}} \mathds{1}_{W_z}(\mathbf{u})\cdot \mathds{1}_{W_z}(\mathbf{u}+\mathbf{v}) \notag\\
	&= \sum_{\mathbf{u}\in \text{RM}(m-1,r)}\mathds{1}_{W_z}(\mathbf{u})\cdot \sum_{\mathbf{v}\in \text{RM}(m-1,r-1)} \mathds{1}_{W_z}(\mathbf{u}+\mathbf{v}) \notag\\
	&= \sum_{\mathbf{u}\in \text{RM}(m-1,r)}\mathds{1}_{W_z}(\mathbf{u})\cdot A_{\mathbf{u}}(z) \notag\\
	&\stackrel{(a)}{=} \sum_{i=1}^M \sum_{\mathbf{u}\in \mathbf{u}_i+\text{RM}(m-1,r-1)}\mathds{1}_{W_z}(\mathbf{u})\cdot A_{\mathbf{u}_i}(z) = \sum_{i=1}^M \left(A_{\mathbf{u}_i}(z)\right)^2, \label{eq:cosetsum}
\end{align}
where equality (a) uses the fact that any $\mathbf{u}\in \text{RM}(m-1,r)$ belongs to some coset $\mathbf{u}_i + \text{RM}(m-1,r-1)$. 

While equality \eqref{eq:cosetsum} provides a neat method to count the number of constrained codewords $N\left(\text{RM}(m,r);C_z^2\right)$, provided the coset weight enumerators $A_{\mathbf{u}_i}(z)$, $1\leq i\leq M$, are known, observe that in the summation in \eqref{eq:cosetsum}, we need to perform $M-1 = 2^{{m-1\choose r}}-1$ additions. If $r$ is large, the number of such additions can be fairly high. We show next that with the help of Theorem III.1 and Lemma IV.3, it is possible to reduce the number of computations, when $r$ is large. Before we do so, we recall the fact that for $r\leq m-1$, the dual code of RM$(m,r)$ is the code RM$(m,m-r-1)$. We let $\overline{A}_{\mathbf{u}}(y)$ be the weight enumerator of the coset $\mathbf{u} + \text{RM}(m-1,m-r-2)$, at the weight $0\leq y\leq 2^{m-1}$, for $\mathbf{u}\in \text{RM}(m-1,m-r-1)$. Further, we let $\overline{\mathbf{u}}_1,\overline{\mathbf{u}}_2,\ldots,\overline{\mathbf{u}}_{\overline{M}}$ be an enumeration of coset representatives of distinct cosets of RM$(m-1,m-r-2)$ in RM$(m-1,m-r-1)$, where $\overline{M} = \frac{|\text{RM}(m-1,m-r-1)|}{|\text{RM}(m-1,m-r-2)|} = 2^{{m-1\choose m-r-1}}$.

Now, applying Theorem III.1, we see that
\begin{align}
	N\left(\text{RM}(m,r);C_z^2\right) &= \sum_{\mathbf{s}_1\mathbf{s}_2\in \text{RM}(m,m-r-1)} K_z^{(n/2)}(w(\mathbf{s}_1))\cdot K_z^{(n/2)}(w(\mathbf{s}_2)) \notag\\
	&= \sum_{\substack{\mathbf{u}\in \text{RM}(m-1,m-r-1),\\ \mathbf{v}\in \text{RM}(m-1,m-r-2)}} K_z^{(n/2)}(w(\mathbf{u}))\cdot K_z^{(n/2)}(w(\mathbf{u}+\mathbf{v})) \notag\\
	&= \sum_{\mathbf{u}\in \text{RM}(m-1,m-r-1)} K_z^{(n/2)}(w(\mathbf{u}))\cdot \sum_{\mathbf{v}\in \text{RM}(m-1,m-r-2)} K_z^{(n/2)}(w(\mathbf{u}+\mathbf{v})) \notag\\
	&= \sum_{\mathbf{u}\in \text{RM}(m-1,m-r-1)} K_z^{(n/2)}(w(\mathbf{u}))\cdot \sum_{j=0}^{2^{m-1}}\overline{A}_\mathbf{u}(j)\cdot K_z^{(n/2)}(j) \notag\\
	&= \sum_{i=1}^{\overline{M}} \left(\sum_{j=0}^{2^{m-1}}\overline{A}_{\overline{\mathbf{u}}_i}(j)\cdot K_z^{(n/2)}(j)\right)^2. \label{eq:cosetsum2}
\end{align}
Now, observe that using equality \eqref{eq:cosetsum2}, the number of computations required, in the form of summations, assuming that the coset weight enumerators $\overline{A}_{\overline{\mathbf{u}}_i}(\cdot)$ are known, for all $1\leq i\leq \overline{M}$, is $2^{m-1+\overline{M}}-1 = 2^{m-1+{m-1\choose m-r-1}}-1$. Clearly, since for large $r$ (and large $m$), we have that $m-1+{m-1\choose m-r-1}< {m-1\choose r}$, we note the relative ease of calculating $N\left(\text{RM}(m,r);C_z^2\right)$ via \eqref{eq:cosetsum2}, with the aid of Theorem III.1, as compared to using \eqref{eq:cosetsum}. We remark here that the analysis of the number of codewords in RM$(m,r)$ that lie in $C_z^p$ can be extended to values of $p$ that are powers of $2$, by iteratively applying the Plotkin decomposition. Finally, we note that in order to compute the coset weight enumerators required in \eqref{eq:cosetsum} and \eqref{eq:cosetsum2}, one can use the recursive algorithm provided in \cite{yao}, which applies to RM codes, in addition to polar codes.
	%\input{appf.tex}
	
	\bibliographystyle{IEEEtran}
	{\footnotesize
		\bibliography{references}}